\newcolumntype{Y}{>{\centering\arraybackslash}X}
\newcommand{\norm}[1]{\left\lVert#1\right\rVert}
\DeclareMathOperator*{\argmax}{arg\,max}
\DeclareMathOperator*{\argmin}{arg\,min}
\newtheorem{proposition}{Proposition}
\newtheorem{lemma}{Lemma}
\title{Large-Scale Portfolio Allocation Under Transaction Costs and Model Uncertainty\thanks{ Nikolaus Hautsch (\href{mailto:nikolaus.hautsch@univie.ac.at}{\nolinkurl{nikolaus.hautsch@univie.ac.at}}), University of Vienna, Research Platform ''Data Science @ Uni Vienna'' as well as Vienna Graduate School of Finance (VGSF) and Center for Financial Studies (CFS), Frankfurt. 
Department of Statistics and Operations Research, Faculty of Business, Economics and Statistics, Oskar-Morgenstern-Platz 1, University of Vienna, A-1090 Vienna, Austria, Phone: +43-1-4277-38680, Fax: +43-4277-8-38680.
Stefan Voigt, WU (Vienna University of Economics and Business) and VGSF.
}}
\author{
\begin{tabularx}{0.9\textwidth}{YY}
	Nikolaus Hautsch                             & Stefan Voigt                                
\end{tabularx}
}
\date{\vspace{-1.1cm} }
\begin{document}
\clearpage
\maketitle
\thispagestyle{empty}
\begin{abstract}
	\noindent We theoretically and empirically study portfolio optimization under transaction costs and establish a link between turnover penalization and covariance shrinkage with the penalization governed by transaction costs. We show how the ex ante incorporation of transaction costs shifts optimal portfolios towards regularized versions of efficient allocations.  The regulatory effect of transaction costs is studied in an econometric setting incorporating parameter uncertainty and optimally combining predictive distributions resulting from high-frequency and low-frequency data. In an extensive empirical study, we illustrate that turnover penalization is more effective than commonly employed shrinkage methods and is crucial in order to construct empirically well-performing portfolios.
			
\vspace{1em}

\noindent\textbf{JEL classification:} C11, C52, 58, G11

\noindent\textbf{Keywords:} Portfolio choice, transaction costs, model uncertainty, regularization, high frequency data 
\end{abstract}

\newpage
\clearpage
\setcounter{page}{1}
\section{Introduction}\label{sec:introduction}
Optimizing large-scale portfolio allocations remains a challenge for econometricians and practitioners due to (i) the noisiness of parameter estimates in large dimensions, (ii) model uncertainty and time variations in individual models' forecasting performance, and (iii) the presence of transaction costs, making otherwise optimal rebalancing costly and thus sub-optimal. 

Although there is a huge literature on the statistics of portfolio allocation, it is widely fragmented and typically only focuses on partial aspects. 
For instance, a substantial part of the literature concentrates on the problem of estimating vast-dimensional covariance matrices by means of regularization techniques, see, e.g., \cite{Ledoit.2003, Ledoit.2004, Ledoit.2012} and \cite{Fan.2008}, among others.
This literature has been boosted by the availability of high-frequency (HF) data, which opens an additional channel to increase the precision of covariance estimates and forecasts, see, e.g.,  \cite{BarndorffNielsen.2004}.
Another segment of the literature studies the effects of ignoring parameter uncertainty and model uncertainty arising from changing market regimes and structural breaks.\footnote{The effect of ignoring estimation uncertainty is considered, among others, by \cite{Brown.1976}, \cite{Jobson.1980}, \cite{Jorion.1986} and \cite{Chopra.1993}. 
Model uncertainty is investigated, for instance, by \cite{Wang.2005}, \cite{Garlappi.2007} and \cite{Pflug.2012}.} 
Further literature is devoted to the role of transaction costs in portfolio allocation strategies. 
In the presence of transaction costs, the benefits of reallocating wealth may be smaller than the costs associated with turnover. 
This aspect has been investigated theoretically, among others, for one risky asset by \cite{Magill.1976} and \cite{Davis.1990}. Subsequent extensions to the case with multiple assets have been proposed by \cite{Taksar.1988}, \cite{Akian.1996}, \cite{Leland.1999} and \cite{Balduzzi.2000}.
More recent papers on empirical approaches which explicitly account for transaction costs include \cite{Liu.2004}, \cite{Lynch.2010}, \cite{Garleanu.2013} and \cite{DeMiguel.2014, DeMiguel.2015}. 

Our paper connects the work on shrinkage estimation and transaction costs in two ways: 
First, we show a close connection between covariance regularization and the effects of transaction costs in the optimization procedure.\footnote{The work of \cite{DeMiguel.2018} is close in spirit to our approach. 
Though both papers derive optimal portfolios after imposing a $L_p$ penalty on rebalancing, the implications, however, are different. 
While we focus on the regularization effect, \cite{DeMiguel.2018} point out the close relationship to a robust Bayesian decision problem where the investor imposes priors on her optimal portfolio.}
Second, we empirically document these effects in a large-scale study mimicking portfolio optimization under preferably realistic conditions based on a large panel of assets through more than 10 years.

In fact, in most empirical studies, transaction costs are incorporated \emph{ex post} by analyzing to which extent a certain portfolio strategy would have survived in the presence of transaction costs of given size.\footnote{See, e.g., \cite{DeMiguel.2009b} or \cite{Hautsch.2015}.} 
In financial practice, however, the costs of portfolio rebalancing are taken into account \emph{ex ante} and thus are part of the optimization problem. 
Thus, our objective is to understand the effect of turnover penalization on the ultimate object of interest of the investor -- the optimal portfolio allocation. This focus is notably different from the aim of providing sensible estimates of asset return covariances (via regularization methods), which are then plugged into a portfolio problem. Instead, we show how the presence of transaction costs changes the optimal portfolio and provide an alternative interpretation in terms of parameter shrinkage.
In particular, we illustrate that \emph{quadratic} transaction costs can be interpreted as shrinkage of the variance-covariance matrix towards a diagonal matrix and a shift of the mean that is proportional to transaction costs and current holdings. 
Transaction costs \emph{proportional} to the amount of rebalancing imply a regularization of the covariance matrix, acting similarly as the least absolute shrinkage and selection operator (Lasso) by \cite{Tibshirani.1996} in a regression problem and imply to put more weight on a buy-and-hold strategy. 
The regulatory effect of transaction costs results in better conditioned covariance estimates and  significantly reduces the amount (and frequency) of rebalancing. 
These mechanisms imply strong improvements of portfolio allocations in terms of expected utility and Sharpe ratios compared to the case where transaction costs are neglected.

We perform a reality check by empirically analyzing the role of transaction costs in a high-dimensional and preferably realistic setting. 
We take the perspective of an investor who is monitoring the portfolio allocation on a daily basis while accounting for the (expected) costs of rebalancing. 
The underlying portfolio optimization setting accounts for parameter uncertainty and model uncertainty, while utilizing not only predictions of the covariance structure but also of higher-order moments of the asset return distribution. 
Model uncertainty is taken into account by considering time-varying combinations of predictive distributions resulting from competing models using optimal prediction pooling according to \cite{Geweke.2011b}. This makes the setup sufficiently flexible to utilize a long sample covering high-volatility and low-volatility periods subject to obvious structural breaks. As a by-product we provide insights into the time-varying nature of the predictive ability of individual models under transaction costs and to which extent suitable forecast combinations may result into better portfolio allocations.

The downside of such generality is that the underlying optimization problem cannot be solved in closed form and requires (high-dimensional) numerical integration.
We therefore pose the econometric model in a Bayesian framework, which allows to integrate out parameter uncertainty and to construct posterior predictive asset return distributions based on time-varying mixtures making use of Bayesian computation techniques. 
Optimality of the portfolio weights is ensured with respect to the predicted out-of-sample utility net of transaction costs. 
The entire setup is complemented by a portfolio bootstrap by performing the analysis based on randomized sub-samples out of the underlying asset universe. 
In this way, we are able to gain insights into the statistical significance of various portfolio performance measures.

We analyze a large-scale setting based on all constituents of the S\&P500 index, which are continuously traded on Nasdaq between 2007 and 2017, corresponding to 308 stocks. 
Forecasts of the daily asset return distribution are produced based on three major model classes. 
On the one hand, utilizing HF message data, we compute estimates of daily asset return covariance matrices using blocked realized kernels according to \cite{Hautsch.2012}. 
The kernel estimates are equipped with a Gaussian-inverse Wishart mixture close to the spirit of \cite{Jin.2013} to capture the entire return distribution. 
Moreover, we compute predictive distributions resulting from a daily multivariate stochastic volatility factor model in the spirit of \cite{Chib.2006}.\footnote{So far, stochastic volatility models have been shown to be beneficial in portfolio allocation by \cite{Aguilar.2000} and \cite{Han.2006} for just up to 20 assets.} 
As a third model class, representing traditional estimators based on rolling windows, we utilize the sample covariance and the (linear) shrinkage estimator proposed by \cite{Ledoit.2003, Ledoit.2004}. 

To our best knowledge, this paper provides the first study evaluating the predictive power of high-frequency and low-frequency models in a large-scale portfolio framework under such generality, while utilizing data of $2,409$ trading days and more than 73 billion high-frequency observations. 
Our approach brings together concepts from (i) Bayesian estimation for portfolio optimization, ii) regularization and turnover penalization, (iii) predictive model combinations in high dimensions and (iv) HF-based covariance modeling and prediction. 

We can summarize the following findings: 
First, none of the underlying predictive models is able to produce positive Sharpe ratios when transaction costs are \emph{not} taken into account ex ante. 
This is mainly due to high turnover implied by (too) frequent rebalancing. 
This result changes drastically when transaction costs are considered in the optimization \textit{ex ante}.
Second, when incorporating transaction costs into the optimization problem, performance differences between competing predictive models for the return distribution become smaller. 
It is shown that none of the underlying approaches does produce significant utility gains on top of each other. 
We thus conclude that the respective pros and cons of the individual models in terms of efficiency, predictive accuracy and stability of covariance estimates are leveled out under turnover regularization. 
Third, despite of a similar performance of individual predictive models, mixing high-frequency and low-frequency information is beneficial and yields significantly higher Sharpe ratios. 
This is due to time variations in the individual model's predictive ability. 
Fourth, naive strategies, variants of minimum variance allocations and further competing strategies are statistically and economically significantly outperformed. 

The structure of this paper is as follows: 
Section \ref{sec:setup} theoretically studies the effect of transaction costs on the optimal portfolio structure. 
Section \ref{sec:combinations} gives the econometric setup accounting for parameter and model uncertainty. Section \ref{sec:models} presents the underlying predictive models. 
In Section \ref{sec:empirics}, we describe the data and present the empirical results. 
Finally, Section \ref{sec:conclusion} concludes. 
All proofs, more detailed information on the implementation of the estimation procedures described in the paper as well as additional results are provided in an accompanying Online Appendix. 

\section{The Role of Transaction Costs}\label{sec:setup}

\subsection{Decision Framework}
We consider an investor equipped with a utility function $U_\gamma(r)$ depending on returns $r$ and risk aversion parameter $\gamma$.
At every period $t$, the investor allocates her wealth among $N$ distinct risky assets with the aim to maximize expected utility at $t+1$ by choosing the allocation vector $\omega_{t+1}\in\mathbb{R}^N$. We impose the constraint $\sum\limits_{i=1}^N \omega_{t+1,i} =1$. 
The choice of $\omega_{t+1}$ is based on drawing inference from observed data. 
The information set at time $t$ consists of the time series of past returns ${R_t=\left(r'_{1},\ldots,r'_{t}\right)'\in\mathbb{R}^{t \times N}}$, where $r_{t, i}  :=\frac{p_{t, i}-p_{t-1, i}}{p_{t-1, i}}$ are the simple (net) returns computed using end-of-day asset prices $p_t := (p_{t,1},\ldots,p_{t,N})\in\mathbb{R}^N$. 
The set of information may contain additional variables $\mathcal{H}_t$, as, e.g., intra-day data. 

We define an optimal portfolio as the allocation, which maximizes expected utility of the investor after subtracting transaction costs arising from rebalancing. 
We denote transaction costs as $\nu_t (\omega)$, depending on the desired portfolio weight $\omega$ and reflecting broker fees and implementation shortfall. 

Transaction costs are a function of the distance vector $\Delta_t := \omega_{t+1} - \omega_{t^+}$ between the new allocation $\omega_{t+1}$ and the allocation right before readjustment,  $\omega_{t^+} := \frac{\omega_t \circ \left(1+r_{t}\right)}{1 + \omega_t ' r_{t}}$, where the operator $\circ$ denotes element-wise multiplication. The vector $\omega_{t^+}$ builds on the allocation $\omega_t$, which has been considered as being optimal given expectations in $t-1$, but effectively changed due to returns between $t-1$ and $t$. 

At time $t$, the investor monitors her portfolio and solves a static maximization problem conditional on her current beliefs on the distribution of returns $p_t\left(r_{t+1}|\mathcal{D}\right) := p(r_{t+1}|R_t,\mathcal{H}_t, \mathcal{D})$ of the next period and the current portfolio weights $\omega_{t^+}$:
\begin{align*}\label{al:EU}\omega_{t+1} ^* &:=\argmax_{\omega\in\mathbb{R}^N,\text{ }  \iota'\omega=1} E\left(U_\gamma\left(\omega ' \left(1+r_{t+1}\right)-\nu_{t}(\omega)\right)|R_t,\mathcal{H}_t, \mathcal{D}\right) \\\tag{EU} &= \argmax_{\omega\in\mathbb{R}^N,\text{ } \iota'\omega=1} \int\limits_{\mathbb{R}^N} U_\gamma\left(\omega '\left(1+r_{t+1}\right)-\nu_t (\omega)\right)p_t (r_{t+1}|\mathcal{D})\text{d}r_{t+1}\end{align*}
where $\iota$ is a vector of ones. 
Note that optimization problem (\ref{al:EU}) reflects the problem of an investor who constantly monitors her portfolio and exploits all available information, but rebalances only if the costs implied by deviations from the path of optimal allocations exceed the costs of rebalancing.
This form of myopic portfolio optimization ensures optimality (after transaction costs) of allocations at each point in time.\footnote{The investment decision is myopic in the sense that it does not take into account the opportunity to rebalance the portfolio next day. 
Thus, hedging demand for assets is not adjusted for. 
Although attempts exist to investigate the effect of (long-term) strategic decisions of the investor, numerical solutions are infeasible in our high-dimensional asset space (see, e.g. \cite{Campbell.2003}).
Further, under certain restrictions regarding the utility function of the investor, the myopic portfolio choice indeed represents the optimal solution to the (dynamic) control problem.} 
Accordingly, the optimal wealth allocation $\omega_{t+1} ^*$ from representation (\ref{al:EU}) is governed by i) the structure of turnover penalization $\nu_t(\omega)$, and ii) the return forecasts $p_t (r_{t+1}|\mathcal{D})$. 
 
\subsection{Transaction Costs in Case of Gaussian Returns}
In general, the solution to the optimization problem (\ref{al:EU}) cannot be derived analytically but needs to be approximated using numerical methods. However, assuming $p _t(r_{t+1}|\mathcal{D})$ being a multivariate normal density with known parameters $\Sigma$ and $\mu$, problem (\ref{al:EU}) coincides with the initial \cite{Markowitz.1952} approach and yields an analytical solution, resulting from the maximization of the certainty equivalent (CE) after transaction costs,
\begin{align}
\label{equ:CE_representation}\omega_{t+1} ^* & = \argmax _{\omega \in \mathbb{R}^N,  \iota'\omega = 1} \omega'\mu - \nu_t \left(\omega\right) - \frac{\gamma}{2}\omega'\Sigma\omega.                        
\end{align}
\subsubsection{Quadratic transaction costs}
We model the transaction costs $
\nu_t\left(\omega_{t+1} \right):=\nu\left(\omega_{t+1},\omega_{t^+}, \beta \right)$ for shifting wealth from allocation $\omega_{t^+}$ to $\omega_{t+1}$ as a quadratic function given by \begin{align}\nu_{L_2}\left(\omega_{t+1},\omega_{t^+}, \beta\right) := \frac{\beta}{2} \left(\omega_{t+1} - \omega_{t^+}\right)'\left(\omega_{t+1}- \omega_{t^+}\right)
\end{align}
with cost parameter $\beta>0$. 
The allocation $\omega_{t+1} ^*$ according to (\ref{equ:CE_representation}) can then be restated as
\begin{align}
	\label{equ:maximization_naive}\omega_{t+1} ^* = \argmax _{\omega \in \mathbb{R}^N,  \iota'\omega = 1} \omega'\mu - \frac{\beta}{2} \left(\omega - \omega_{t^+}\right)'\left(\omega- \omega_{t^+}\right) - \frac{\gamma}{2}\omega'\Sigma\omega                               = \argmin_{\omega\in\mathbb{R}^N,\text{ }  \iota'\omega=1} 
\frac{\gamma}{2}\omega'\Sigma^* \omega -\omega'\mu^*, 
\end{align}
with
\begin{align}
	\label{ref:Sigma_star}\Sigma^* & := \frac{\beta}{\gamma} I + \Sigma,\\
	\mu^*    & :=\mu+\beta \omega_{t^+},
\end{align}
where $I$ denotes the identity matrix.
The optimization problem with quadratic transaction costs can be thus interpreted as a classical mean-variance problem \emph{without} transaction costs, where (i) the covariance matrix $\Sigma$ is regularized  towards the identity matrix (with $\frac{\beta}{\gamma}$ serving as shrinkage parameter) and (ii) the mean is shifted by $\beta \omega_{t^+}$. 

The shift from $\mu$ to $\mu^* = \mu + \beta \omega_{t^+}$ can be alternatively interpreted by exploiting $\omega'\iota = 1$ and reformulating the problem as 
\begin{equation}
\omega_{t+1}^* = \argmin_{\omega\in\mathbb{R}^N,\text{ }  \iota'\omega=1} 
\frac{\gamma}{2}\omega'\Sigma^* \omega -\omega'\left(\mu + \beta \left(\omega_{t^+} - \frac{1}{N}\iota\right) \right).
\end{equation}
The shift of the mean vector is therefore proportional to the deviations of the current allocation to the $1/N$-setting. This can be interpreted as putting more weight on assets with (already) high exposure.

Hence, if $\beta$ increases, $\Sigma^*$ is shifted towards a diagonal matrix representing the case of uncorrelated assets. 
The regularization effect of $\Sigma^*$ exhibits some similarity to the implications of the shrinkage approach proposed by \cite{Ledoit.2003, Ledoit.2004}. 
While the latter propose to shrink the eigenvalues of the covariance matrix $\Sigma$,  \eqref{ref:Sigma_star} implies shifting all eigenvalues by a constant $\frac{\beta}{\gamma}$ which has a stabilizing effect on the conditioning of $\Sigma^*$. As shown in Section A.1 of the Online Appendix, a direct link to (linear) shrinkage can be constructed by imposing asset-specific transaction costs of the form $v(\omega_{t+1},\omega_{t^+}, B) := \left(\omega_{t+1}-\omega_{t^+}\right)'B\left(\omega_{t+1}-\omega_{t^+}\right)$ for a positive definite matrix $B$. Then, the linear shrinkage estimator of \cite{Ledoit.2003, Ledoit.2004} originate as soon as transaction costs are specifically related to the eigenvalues of $\Sigma$.

Note, however, that in our context, transaction costs $\beta$ are treated as an exogenous parameter, which is determined by the institutional environment. Thus it is not a 'free' parameter and is independent of $\Sigma$. This makes the transaction-cost-implied regularization different to the case of linear shrinkage and 
creates a seemingly counterintuitive result (at first sight): 
Due to the fact that transaction costs $\beta$ are independent of $\Sigma$, periods of high volatility (i.e., an increase of $\Sigma$) imply a relatively lower weight on the identity matrix $I$ and thus less penalization compared to the case of a period where $\Sigma$ is scaled downwards. 
This is opposite to the effect which is expected for shrinkage, where higher volatility implies higher parameter uncertainty and thus stronger penalization. 
As shown by the lemma below, however, the impression of counterintuition disappears if one turns attention not to the covariance matrix \emph{per se}, but to the objective of portfolio optimization.
\begin{lemma}
	Assume a regime of high volatility with $\Sigma^h = (1+h)\Sigma$, where $h>0$ and $\Sigma$ is the asset return covariance matrix during \textit{calm} periods. Assume $\mu=0$. 
	Then, the optimal weight $\omega_{t+1}^*$ is equivalent to the optimal portfolio based on $\Sigma$ and (reduced) transaction costs $\frac{\beta}{1+h} < \beta$.  
\end{lemma}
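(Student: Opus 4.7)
The plan is to plug the inflated covariance $\Sigma^h = (1+h)\Sigma$ into the closed-form representation \eqref{equ:maximization_naive}, factor out $(1+h)$ from the quadratic term, and then rescale the whole objective by a positive constant, which is neutral for the argmin. The key observation is that both the identity-regularization term and the mean-shift term in $\mu^* = \mu + \beta \omega_{t^+}$ carry a factor of $\beta$; with $\mu=0$, this lets transaction costs be uniformly rescaled together with volatility.

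Concretely, under $\Sigma^h$ and $\mu=0$, the problem \eqref{equ:maximization_naive} becomes
\begin{equation*}
\omega_{t+1}^{*} = \argmin_{\omega\in\mathbb{R}^N,\;\iota'\omega=1} \frac{\gamma}{2}\,\omega'\!\left(\tfrac{\beta}{\gamma} I + (1+h)\Sigma\right)\omega \;-\; \beta\,\omega'\omega_{t^+}.
\end{equation*}
I would pull $(1+h)$ out of the quadratic form to write $\tfrac{\beta}{\gamma} I + (1+h)\Sigma = (1+h)\bigl(\tfrac{\beta/(1+h)}{\gamma} I + \Sigma\bigr)$, and then divide the whole objective by $(1+h)>0$, which leaves the argmin unchanged since the constraint set does not depend on $\Sigma$ or $\beta$. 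This yields the equivalent problem
\begin{equation*}
\omega_{t+1}^{*} = \argmin_{\omega\in\mathbb{R}^N,\;\iota'\omega=1} \frac{\gamma}{2}\,\omega'\!\left(\tfrac{\tilde\beta}{\gamma} I + \Sigma\right)\omega \;-\; \tilde\beta\,\omega'\omega_{t^+}, \qquad \tilde\beta := \frac{\beta}{1+h}.
\end{equation*}

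Reading this off against \eqref{equ:maximization_naive} with $\mu=0$, it is exactly the quadratic-transaction-cost problem based on the calm-regime covariance $\Sigma$ and the reduced cost parameter $\tilde\beta = \beta/(1+h) < \beta$, which is what the lemma claims. The only subtlety worth flagging is that the argument relies on $\mu=0$: if $\mu\neq 0$, the linear term $\omega'\mu$ is not multiplied by $\beta$ and therefore does not rescale in step with the rest of the objective, so the equivalence would break down. Apart from this caveat, the proof is essentially a one-line algebraic rearrangement, and there is no genuine obstacle beyond being careful to scale both the identity penalty and the mean-shift coherently.
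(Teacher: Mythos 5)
Your proof is correct and follows the same route the paper's argument takes: substitute $\Sigma^h=(1+h)\Sigma$ and $\mu=0$ into the reformulation \eqref{equ:maximization_naive}, factor $(1+h)$ out of $\frac{\beta}{\gamma}I+(1+h)\Sigma$, and divide the objective by $(1+h)>0$, which leaves the argmin over $\{\omega:\iota'\omega=1\}$ unchanged and yields exactly the calm-regime problem with cost parameter $\tilde\beta=\beta/(1+h)$. Your caveat about $\mu\neq 0$ is also apt, since the term $\omega'\mu$ carries no factor of $\beta$ and would not rescale coherently, which is precisely why the lemma imposes $\mu=0$.
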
 
\begin{proof}
	See Section A.2 in the Online Appendix.
\end{proof}
\noindent Lemma 1 implies that during periods with high volatility (and covariances) the optimal portfolio decision is triggered \textit{less} by rebalancing costs, but \textit{more} by the need to reduce portfolio volatility. 
The consequence is a shift towards the (global) minimum-variance portfolio with weights $\omega \propto \left(\frac{\beta}{(1+h)\gamma}I + \Sigma\right)^{-1}\iota$. 
Alternatively, one can associate this scenario with a situation of higher risk aversion and thus a stronger need for optimal risk allocation.

The view of treating $\beta$ as an exogenous transaction cost parameter can be contrasted to recent approaches linking transaction costs to volatility. In fact, if transaction costs are assumed to be \emph{proportional} to volatility, $v\left(\omega_{t+1},\omega_{t^+}, \beta\Sigma\right) = \frac{\beta}{2}  \Delta_t'\Sigma\Delta_t$, as, e.g., suggested by \cite{Garleanu.2013}, we obtain $\Sigma^* = \left(1+\frac{\beta}{\gamma}\right)\Sigma$. As shown in Lemma 2 in Section A.2 in the Online Appendix, in this case,  the optimal weights take the form $\omega^{\beta \Delta}_{t+1} = \omega^{\gamma + \beta}_{t+1} + \frac{\beta}{\beta + \gamma}\left(\omega_{t^+} - w^\text{mvp}\right)$, where $\omega^{\gamma + \beta}_{t+1}$ is the efficient portfolio \textit{without} transaction costs, parameters $\Sigma$ and $\mu$ and risk aversion parameter $\gamma + \beta$, and $\omega^\text{mvp} := \frac{1}{\iota'\Sigma^{-1}\iota}\Sigma^{-1}\iota$ corresponds to the minimum variance allocation. Hence, the optimal weights are a linear combination of $\omega^{\gamma + \beta}_{t+1}$ with weight $1$ and of $(\omega_{t^+} - w^\text{mvp})$ with weight $\frac{\beta}{\beta + \gamma}$. Changes in $\Sigma$ thus only affect $\omega^{\gamma + \beta}_{t+1}$ and $w^\text{mvp}$ but \emph{not} the weight $\frac{\beta}{\beta + \gamma}$, which might be interpreted as a 'shrinkage intensity' (in a wider sense).
Hence, the regulatory effect of \emph{endogenous} transaction costs can be ambiguous and very much depends on the exact specification of the link between $\beta$ and $\Sigma$. In contrast, \emph{exogenous} transaction costs imply a distinct regulatory effect, which is shown to be empirically successful as documented in Section \ref{sec:empirics}. 

Focusing on $v_{L_2}\left(\omega_{t+1},\omega_{t^+},\beta\right)$ allows to us to investigate the relative importance of the current holdings on the optimal financial decision.
Proposition \ref{proposition:limit} shows the effect of rising transaction costs on optimal rebalancing.
\begin{proposition}\label{proposition:limit}
\begin{align}
\lim_{\beta\rightarrow\infty}\omega_{t+1}^* 
&= \left(I - \frac{1}{N}\iota \iota' \right) \omega_{t^+}  + \frac{1}{N}\iota = \omega_{t^+}.
\end{align}
\end{proposition}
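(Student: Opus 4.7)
My plan is to reduce the problem to an asymptotic analysis of the closed-form solution of the quadratic program in (\ref{equ:maximization_naive}). Attaching a Lagrange multiplier $\lambda$ to the budget constraint $\iota'\omega=1$ and differentiating, the first-order condition $\gamma\Sigma^{*}\omega - \mu^{*} = \lambda\iota$ together with feasibility yields
\begin{equation*}
\omega_{t+1}^{*} \;=\; (\gamma\Sigma^{*})^{-1}\mu^{*} \;+\; \lambda\,(\gamma\Sigma^{*})^{-1}\iota, \qquad \lambda \;=\; \frac{1 - \iota'(\gamma\Sigma^{*})^{-1}\mu^{*}}{\iota'(\gamma\Sigma^{*})^{-1}\iota}.
\end{equation*}
Substituting $\gamma\Sigma^{*} = \beta I + \gamma\Sigma$ and $\mu^{*} = \mu + \beta\omega_{t^{+}}$ turns the statement into a question about the behavior of $(\beta I + \gamma\Sigma)^{-1}$ as $\beta\to\infty$.

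The next step is a Neumann expansion $(\beta I + \gamma\Sigma)^{-1} = \beta^{-1}(I + \gamma\Sigma/\beta)^{-1} = \beta^{-1}I - \gamma\beta^{-2}\Sigma + O(\beta^{-3})$, which is valid as soon as $\beta > \gamma\|\Sigma\|$. Plugging in gives $(\gamma\Sigma^{*})^{-1}(\beta\omega_{t^{+}}) = \omega_{t^{+}} - (\gamma/\beta)\Sigma\omega_{t^{+}} + O(\beta^{-2})$ and $(\gamma\Sigma^{*})^{-1}\mu = O(\beta^{-1})$, so $(\gamma\Sigma^{*})^{-1}\mu^{*}\to\omega_{t^{+}}$, while $(\gamma\Sigma^{*})^{-1}\iota = \beta^{-1}\iota + O(\beta^{-2})\to 0$.

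The one genuinely delicate point is the Lagrange multiplier: both numerator and denominator of $\lambda$ vanish at order $\beta^{-1}$, so $\lambda$ is a priori indeterminate. To resolve this, I would expand one order further, using $\iota'\omega_{t^{+}}=1$, to obtain $1 - \iota'(\gamma\Sigma^{*})^{-1}\mu^{*} = \beta^{-1}(\gamma\iota'\Sigma\omega_{t^{+}} - \iota'\mu) + O(\beta^{-2})$ and $\iota'(\gamma\Sigma^{*})^{-1}\iota = N/\beta + O(\beta^{-2})$, so that $\lambda \to (\gamma\iota'\Sigma\omega_{t^{+}} - \iota'\mu)/N$, a finite limit. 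The product $\lambda(\gamma\Sigma^{*})^{-1}\iota$ therefore vanishes, leaving $\omega_{t+1}^{*}\to\omega_{t^{+}}$. The first equality in the proposition reduces to the purely algebraic identity $(I - N^{-1}\iota\iota')\omega_{t^{+}} + N^{-1}\iota = \omega_{t^{+}}$, which collapses using $\iota'\omega_{t^{+}}=1$.

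As a sanity check I would note the following strong-convexity shortcut, which avoids inverting any matrix: since $\omega_{t^{+}}$ is itself feasible, optimality of $\omega_{t+1}^{*}$ combined with positive semidefiniteness of $\Sigma$ gives $\tfrac{\beta}{2}\norm{\omega_{t+1}^{*}-\omega_{t^{+}}}^{2} \le (\omega_{t+1}^{*}-\omega_{t^{+}})'\mu + \tfrac{\gamma}{2}\omega_{t^{+}}'\Sigma\omega_{t^{+}}$, from which the quadratic formula forces $\norm{\omega_{t+1}^{*}-\omega_{t^{+}}} = O(\beta^{-1/2})$. The main obstacle is thus not the convergence itself but the bookkeeping of the two competing $O(\beta^{-1})$ terms in $\lambda$; everything else is a routine expansion.
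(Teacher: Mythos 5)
Your proposal is correct, and its main line is essentially the paper's own argument: both solve the constrained quadratic program in closed form and send $\beta\to\infty$. The paper works with the multiplier-free representation $\omega(\mu^*,\Sigma^*)=\frac{1}{\gamma}A(\Sigma^*)\mu^* + \frac{1}{\iota'\Sigma^{*-1}\iota}\Sigma^{*-1}\iota$, where $A(\Sigma^*)=\Sigma^{*-1}-\frac{1}{\iota'\Sigma^{*-1}\iota}\Sigma^{*-1}\iota\iota'\Sigma^{*-1}$; since $\frac{\beta}{\gamma}A(\Sigma^*)\to I-\frac{1}{N}\iota\iota'$, $\frac{1}{\gamma}A(\Sigma^*)\mu\to 0$ and $\Sigma^{*-1}\iota/(\iota'\Sigma^{*-1}\iota)\to\frac{1}{N}\iota$, the intermediate expression $\left(I-\frac{1}{N}\iota\iota'\right)\omega_{t^+}+\frac{1}{N}\iota$ in the proposition emerges term by term with no indeterminacy anywhere. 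Your packaging, which keeps the Lagrange multiplier $\lambda$ explicit, is algebraically equivalent but forces you to resolve the $0/0$ in $\lambda$ by a second-order Neumann expansion — a step you carry out correctly (the limit $(\gamma\iota'\Sigma\omega_{t^+}-\iota'\mu)/N$ is finite and is multiplied by $(\gamma\Sigma^*)^{-1}\iota\to 0$), but which the paper's arrangement sidesteps because the multiplier is already eliminated inside $A(\Sigma^*)$. Two further remarks: your strong-convexity ``sanity check'' is in fact a complete and genuinely different proof of $\omega_{t+1}^*\to\omega_{t^+}$ — more elementary, requiring no matrix inversion and only feasibility of $\omega_{t^+}$ plus positive semidefiniteness of $\Sigma$ — though it only yields the rate $O(\beta^{-1/2})$ rather than the sharp $O(\beta^{-1})$ and does not produce the projection formula displayed in the statement; and your closing use of $\iota'\omega_{t^+}=1$ to collapse $\left(I-\frac{1}{N}\iota\iota'\right)\omega_{t^+}+\frac{1}{N}\iota$ into $\omega_{t^+}$ is exactly the final step the paper's proof requires as well.
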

\begin{proof}
	See Section A.2 in the Online Appendix.
\end{proof}
\noindent Hence, if the transaction costs are prohibitively large, the investor may not implement the efficient portfolio despite her knowledge of the true return parameters $\mu$ and $\Sigma$.
The effect of transaction costs in the long-run can be analyzed in more depth by considering the well-known representation of the mean-variance efficient portfolio,
\begin{align}
\omega(\mu,\Sigma) & := \frac{1}{\gamma}\left(\underbrace{{\Sigma}^{-1} - \frac{1}{\iota' {\Sigma}^{-1}\iota }{\Sigma}^{-1}\iota\iota' {\Sigma}^{-1} }_{:=A(\Sigma)}\right) \mu  + \frac{1}{\iota' {\Sigma}^{-1} \iota }{\Sigma}^{-1} \iota. 
\end{align}
If $\omega_0$ denotes the initial allocation, sequential rebalancing allows us to study the long-run effect, given by
\begin{align}
\omega_T & = \sum\limits_{i=0}^{T-1}\left( \frac{\beta}{\gamma}A(\Sigma^*)\right) ^i \omega(\mu,\Sigma^*) + \left(\frac{\beta}{\gamma}A(\Sigma^*)\right)^T \omega_0.
\end{align}
Therefore, $\omega_T$ can be interpreted as a weighted average of $\omega(\mu,\Sigma^*)$  and the initial allocation $\omega_0$, where the weights depend on the ratio $\beta/\gamma$. 
Proposition \ref{proposition:convergence_high_beta} states that for $\beta\rightarrow\infty$, the long-run optimal portfolio $\omega_T$ is driven only by the initial allocation $\omega_0$.
\begin{proposition}\label{proposition:convergence_high_beta}
$\lim\limits_{\beta\rightarrow\infty}\omega_T = \omega_0$.
\end{proposition}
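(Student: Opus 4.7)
The plan is to analyze the matrix $\frac{\beta}{\gamma}A(\Sigma^*)$ that governs the recursion, show that it converges to the centering projection $M := I - \frac{1}{N}\iota\iota'$ as $\beta\to\infty$, and then exploit the idempotence of $M$ together with $M\iota=0$ and the constraint $\iota'\omega_0=1$ to collapse the telescopic sum in the given representation of $\omega_T$.

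First I would expand the inverse of $\Sigma^* = \frac{\beta}{\gamma}I + \Sigma$ by writing $\Sigma^* = \frac{\beta}{\gamma}\bigl(I + \frac{\gamma}{\beta}\Sigma\bigr)$, which yields the asymptotic identity $(\Sigma^*)^{-1} = \frac{\gamma}{\beta}I + O(\beta^{-2})$. Plugging this into the definition of $A(\Sigma^*) = (\Sigma^*)^{-1} - \frac{1}{\iota'(\Sigma^*)^{-1}\iota}(\Sigma^*)^{-1}\iota\iota'(\Sigma^*)^{-1}$ and multiplying by $\frac{\beta}{\gamma}$, the leading-order calculation gives $\frac{\beta}{\gamma}A(\Sigma^*) \to I - \frac{1}{N}\iota\iota' = M$. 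A direct computation using $\iota'\iota = N$ verifies $M^2 = M$ and $M\iota = 0$, so $M^i = M$ for every $i\geq 1$.

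Next I would handle the two pieces of $\omega_T$ separately. Using the same expansion, $\omega(\mu,\Sigma^*) = \frac{1}{\gamma}A(\Sigma^*)\mu + \frac{(\Sigma^*)^{-1}\iota}{\iota'(\Sigma^*)^{-1}\iota}$ tends to $\frac{1}{N}\iota$: the first summand is $\frac{1}{\beta}\cdot\frac{\beta}{\gamma}A(\Sigma^*)\mu = O(\beta^{-1})$ and the second converges to $\frac{1}{N}\iota$ because the numerator and denominator both scale as $\gamma/\beta$. Passing to the limit inside the finite sum, each term with $i\geq 1$ becomes $M \cdot \frac{1}{N}\iota = 0$ by $M\iota = 0$, while the $i=0$ term contributes $\frac{1}{N}\iota$. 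For the last term, $\bigl(\frac{\beta}{\gamma}A(\Sigma^*)\bigr)^T\omega_0 \to M^T\omega_0 = M\omega_0 = \omega_0 - \frac{\iota'\omega_0}{N}\iota = \omega_0 - \frac{1}{N}\iota$, where the final equality uses the budget constraint $\iota'\omega_0 = 1$. Adding the two limits gives $\frac{1}{N}\iota + \omega_0 - \frac{1}{N}\iota = \omega_0$, as claimed.

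The only mildly delicate step is the matrix-level expansion of $\frac{\beta}{\gamma}A(\Sigma^*)$ at leading order, since one must simultaneously control the $\frac{\beta}{\gamma}(\Sigma^*)^{-1}$ term (which goes to $I$) and the rank-one correction (which goes to $\frac{1}{N}\iota\iota'$). This is purely algebraic but deserves care because the two pieces are both of order one in the limit and cancellations across them would otherwise be easy to mishandle.
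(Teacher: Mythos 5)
Your proof is correct and is essentially the argument the paper relies on: take limits term by term in the finite-sum representation of $\omega_T$, using the facts that $\frac{\beta}{\gamma}A(\Sigma^*)\to I-\frac{1}{N}\iota\iota'$ (an idempotent matrix annihilating $\iota$) and $\omega(\mu,\Sigma^*)\to\frac{1}{N}\iota$, together with $\iota'\omega_0=1$. Note that these two limits are exactly the content of Proposition \ref{proposition:limit}, so you could have shortened the argument by citing it (the one-step map converges to $\omega\mapsto\left(I-\frac{1}{N}\iota\iota'\right)\omega+\frac{1}{N}\iota$, which is the identity on the constraint set, and then inducting over the finite horizon $T$); also, the ``delicate'' cancellation you flag is harmless, since each of the two pieces of $\frac{\beta}{\gamma}A(\Sigma^*)$ converges separately.
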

\begin{proof}
	See Section A.2 in the Online Appendix.
\end{proof}
\noindent The following proposition shows, however, that a range for $\beta$ exists (with critical upper threshold), for which the initial allocation can be ignored in the long-run.
\begin{proposition}\label{proposition:convergence}
$\exists \beta^*> 0 \text{ } \forall \beta\in [0,\beta^*): \left\|\left(\frac{\beta}{\gamma}A(\Sigma^*)\right)\right\|_F < 1$, where $\|\cdot\|_F$ denotes the Frobenius norm $\|A\|_F := \sqrt{\sum\limits_{i=1}^N\sum\limits_{i=1}^N A_{i,j}^2}$ for an $N \times N$ matrix $A$.
\end{proposition}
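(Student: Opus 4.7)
The plan is to prove this by a simple continuity argument: the map $\beta \mapsto \bigl\|\tfrac{\beta}{\gamma}A(\Sigma^*)\bigr\|_F$ vanishes at $\beta=0$, so by continuity it stays below $1$ in some right-neighbourhood of zero, and any such neighbourhood supplies the desired threshold $\beta^*$.

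More concretely, I would define the function
\[
f(\beta) := \left\| \frac{\beta}{\gamma}\, A(\Sigma^*(\beta)) \right\|_F, \qquad \Sigma^*(\beta) = \Sigma + \frac{\beta}{\gamma} I,
\]
on the domain $\beta\in[0,\infty)$. First I would check that $f(0)=0$: at $\beta=0$ we have $\Sigma^*(0)=\Sigma$, which is positive definite by assumption, so $A(\Sigma^*(0))=A(\Sigma)$ is a fixed finite matrix, and multiplication by the scalar $\beta/\gamma=0$ makes the whole expression the zero matrix, whose Frobenius norm is $0<1$.

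Next I would verify that $f$ is continuous on $[0,\infty)$. For every $\beta\geq 0$, the matrix $\Sigma^*(\beta)=\Sigma+\tfrac{\beta}{\gamma}I$ is positive definite (its smallest eigenvalue is at least $\lambda_{\min}(\Sigma)+\beta/\gamma>0$), hence invertible, and matrix inversion is continuous on the open set of invertible matrices. Therefore $(\Sigma^*(\beta))^{-1}$ depends continuously on $\beta$, and so does
\[
A(\Sigma^*(\beta)) = (\Sigma^*(\beta))^{-1} - \frac{1}{\iota'(\Sigma^*(\beta))^{-1}\iota}\,(\Sigma^*(\beta))^{-1}\iota\iota'(\Sigma^*(\beta))^{-1},
\]
since the scalar denominator $\iota'(\Sigma^*(\beta))^{-1}\iota$ is strictly positive (because $(\Sigma^*(\beta))^{-1}$ is positive definite and $\iota\neq 0$) and therefore bounded away from zero on any compact subset of $[0,\infty)$. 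Multiplying by the continuous scalar $\beta/\gamma$ and applying the continuous Frobenius norm yields continuity of $f$.

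Having $f$ continuous at $0$ with $f(0)=0<1$, the definition of continuity directly supplies some $\beta^*>0$ such that $f(\beta)<1$ for every $\beta\in[0,\beta^*)$, which is exactly the claim. The only mildly delicate point, and what I would flag as the main thing to check carefully, is the continuity of $A(\Sigma^*(\beta))$ near $\beta=0$, which reduces to ensuring that $\Sigma^*(\beta)$ stays invertible and that $\iota'(\Sigma^*(\beta))^{-1}\iota$ stays bounded away from $0$; both are immediate from the positive definiteness of $\Sigma$, so no genuine obstacle arises and the argument is essentially a one-line continuity observation once the continuity of the ingredients is noted.
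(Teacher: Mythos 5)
Your proof is correct. With $\Sigma$ positive definite, $\Sigma^*(\beta)=\Sigma+\tfrac{\beta}{\gamma}I$ remains positive definite for all $\beta\ge 0$, inversion is continuous on the set of invertible matrices, and the denominator $\iota'(\Sigma^*(\beta))^{-1}\iota$ is strictly positive, so $f(\beta)=\bigl\|\tfrac{\beta}{\gamma}A(\Sigma^*(\beta))\bigr\|_F$ is continuous on $[0,\infty)$ with $f(0)=0$; continuity at $0$ then yields the required $\beta^*$. You also correctly handle the only delicate point, namely the double appearance of $\beta$ (in the scalar prefactor and inside $\Sigma^*$). Your route is, however, genuinely different from the paper's: the paper works quantitatively through the spectrum of $A(\Sigma^*)$, which is positive semidefinite with $A(\Sigma^*)\iota=0$ (hence rank at most $N-1$) and eigenvalues bounded by $\lambda_{\max}\bigl((\Sigma^*)^{-1}\bigr)=1/\bigl(\lambda_{\min}(\Sigma)+\beta/\gamma\bigr)$, giving the explicit bound
\begin{equation*}
\left\|\frac{\beta}{\gamma}A(\Sigma^*)\right\|_F \;\le\; \frac{(\beta/\gamma)\sqrt{N-1}}{\lambda_{\min}(\Sigma)+\beta/\gamma},
\end{equation*}
which is below $1$ precisely when $\beta<\gamma\lambda_{\min}(\Sigma)/\bigl(\sqrt{N-1}-1\bigr)$, i.e.\ an explicit admissible threshold. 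That constructive version is what underpins the paper's subsequent remark that $\beta^*$ is affected only by the risk aversion $\gamma$ and the eigenvalues of $\Sigma$ (and not by $\omega_0$). Your continuity argument buys brevity and proves exactly what the proposition literally asserts — existence — and that is sufficient for the Neumann-series convergence used in Proposition \ref{proposition:banach}; what it gives up is any information on how large $\beta^*$ may be taken and on what it depends, which is economically the interesting content here.
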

\begin{proof}
	See Section A.2 in the Online Appendix.
\end{proof}
\noindent Using Proposition \ref{proposition:convergence} for $T\rightarrow \infty$ and $\beta < \beta^*$, the series $\sum_{i=0}^{T}\left( \frac{\beta}{\gamma}A(\Sigma^*)\right) ^i$ converges to $\left(I - \frac{\beta}{\gamma}A(\Sigma^*)\right)^{-1}$ and $\lim_{i\rightarrow\infty}\left( \frac{\beta}{\gamma}A(\Sigma^*)\right) ^i = 0$.
In Proposition \ref{proposition:banach} we show that in the long run, we obtain convergence towards the efficient portfolio:
\begin{proposition}\label{proposition:banach} For $T\rightarrow \infty$ and $\beta<\beta^*$ the series $\omega_T$ converges to a unique fix-point given by
\begin{equation}
\omega_\infty  = \left(I - \frac{\beta}{\gamma}A(\Sigma^*)\right)^{-1} \omega(\mu,\Sigma^*) = \omega\left(\mu,\Sigma\right).
\end{equation}
\end{proposition}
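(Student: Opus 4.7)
The plan is to combine Proposition \ref{proposition:convergence} with a Neumann-series (Banach fixed-point) argument to obtain the first equality, and then to identify the resulting fixed point with $\omega(\mu,\Sigma)$ via the first-order conditions of (\ref{equ:maximization_naive}). Set $M := \frac{\beta}{\gamma}A(\Sigma^*)$. Since Proposition \ref{proposition:convergence} ensures $\|M\|_F < 1$ whenever $\beta < \beta^*$ and the Frobenius norm dominates the spectral norm, the partial sums $\sum_{i=0}^{T-1} M^i$ converge to $(I - M)^{-1}$ while $M^T\omega_0 \to 0$ as $T\to\infty$. Substituting into the closed-form expression for $\omega_T$ displayed just above the proposition yields $\omega_\infty = (I - M)^{-1}\omega(\mu,\Sigma^*)$, which establishes the first claimed equality.

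It remains to show $(I - M)^{-1}\omega(\mu,\Sigma^*) = \omega(\mu,\Sigma)$. The recursion generating $\omega_T$ is the affine map $\omega \mapsto M\omega + \omega(\mu,\Sigma^*)$, which one recognises as exactly the one-step optimal rebalancing $\omega_{t^+} \mapsto \omega_{t+1}^*$ obtained from (\ref{equ:maximization_naive}) when $\mu^* = \mu + \beta\omega_{t^+}$ is substituted into the standard mean-variance formula. Because $\|M\| < 1$, Banach's fixed-point theorem guarantees a unique fixed point, which must coincide with $\omega_\infty$. It therefore suffices to check that the frictionless efficient allocation $\omega(\mu,\Sigma)$ is itself a fixed point of this map — intuitively, if the current holding already equals the unconstrained efficient portfolio, there is no incentive to rebalance, so the transaction-cost penalty is inactive at the optimum.

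To verify this algebraically, I would substitute $\omega_{t^+} = \omega(\mu,\Sigma)$ into the KKT system of (\ref{equ:maximization_naive}), which reads $\gamma\Sigma^*\omega - \mu - \beta\omega_{t^+} + \lambda\iota = 0$ subject to $\iota'\omega = 1$. Exploiting $\Sigma^* = \frac{\beta}{\gamma}I + \Sigma$, the $\beta\omega$ contribution coming from $\gamma\Sigma^*\omega$ cancels the $\beta\omega_{t^+}$ term exactly at $\omega = \omega(\mu,\Sigma)$, so the system collapses to $\gamma\Sigma\omega(\mu,\Sigma) + \lambda\iota = \mu$ together with $\iota'\omega(\mu,\Sigma) = 1$ — precisely the FOC characterising $\omega(\mu,\Sigma)$. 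Uniqueness of the fixed point then gives $\omega_\infty = \omega(\mu,\Sigma)$.

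The main obstacle is this last algebraic cancellation: one has to carry the Lagrange multiplier through the computation and notice that $\Sigma^* - \Sigma$ is precisely $\frac{\beta}{\gamma}I$, which is what neutralises the transaction-cost shift $\beta\omega_{t^+}$. A more direct route that expands $(I - M)^{-1}\omega(\mu,\Sigma^*)$ via the Neumann series would be considerably more tedious, because $A(\Sigma^*)$ does not interact cleanly with the projection term $\tfrac{1}{\iota'\Sigma^{-1}\iota}\Sigma^{-1}\iota$ in $\omega(\mu,\Sigma)$; the fixed-point detour avoids this by exploiting the economic observation that rebalancing ceases as soon as the frictionless optimum is reached.
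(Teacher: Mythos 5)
Your proposal is correct and follows essentially the same route the paper takes: the Neumann-series/contraction argument enabled by Proposition \ref{proposition:convergence} (with $\|M\|_2 \le \|M\|_F < 1$) gives convergence to $(I - \frac{\beta}{\gamma}A(\Sigma^*))^{-1}\omega(\mu,\Sigma^*)$, and the limit is identified with $\omega(\mu,\Sigma)$ by the Banach fixed-point theorem together with the verification that the frictionless efficient portfolio is a fixed point of the rebalancing map. Your KKT-based cancellation, using $\gamma\Sigma^*\omega - \beta\omega_{t^+} = \gamma\Sigma\omega$ at $\omega = \omega_{t^+} = \omega(\mu,\Sigma)$, is exactly the identity $(I - M)\,\omega(\mu,\Sigma) = \omega(\mu,\Sigma^*)$ that the paper's argument rests on.
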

\begin{proof}
	See Section A.2 in the Online Appendix.
\end{proof}
\noindent Note, that the location of the initial portfolio $\omega_0$ itself does not play a role on the upper threshold $\beta^*$ ensuring long-run convergence towards $\omega_{\infty}$. Instead, $\beta^*$ is affected only by the risk aversion $\gamma$ and the eigenvalues of $\Sigma$.
	
\subsubsection{Proportional \texorpdfstring{($L_1$)}{(L1)} transaction costs}
Though attractive from an analytical perspective, transaction costs of quadratic form may represent an unrealistic proxy of costs associated with trading in real financial markets, see, among others, \cite{Toth.2011} and \cite{Robert.2012}. Instead, in the literature, there is widespread use of transaction cost measures proportional to the sum of absolute rebalancing ($L_1$-norm of rebalancing), which impose a stronger penalization on turnover and are more realistic.\footnote{Literature typically employs a penalization terms of 50 bp, see, for instance, \cite{DeMiguel.2009b} and \cite{DeMiguel.2018}. } 
Transaction costs proportional to the $L_1$-norm of rebalancing $\Delta_{t} = \omega_{t+1} - \omega_{t^+}$ yield the form
\begin{align}\label{al:L1costs}
	\nu_{L_1}\left(\omega_{t+1},\omega_{t^+}, \beta\right) := \beta \norm{\omega_{t+1} - \omega_{t^+}}_1 = \beta \sum\limits_{i=1}^N \left|\omega_{t+1, i} - \omega_{t^+, i}\right|,
\end{align} with cost parameter $\beta>0$.
Although the effect of proportional transaction costs on the optimal portfolio cannot be derived in closed-form comparable to the quadratic ($L_2$) case discussed above, the impact of turnover penalization can be still interpreted as a form of regularization. 
If we assume for simplicity of illustration $\mu=0$, the optimization problem (\ref{equ:CE_representation}) corresponds to 
\begin{align}\label{al:EUL1}
\omega_{t+1} ^* &:=\argmin_{\omega\in\mathbb{R}^N,\text{ }  \iota'\omega=1} 
\frac{\gamma}{2}\omega'\Sigma \omega + \beta \norm{\omega_{t+1} - \omega_{t^+}}_1 \\
&= \argmin_{\Delta_t\in\mathbb{R}^N,\text{ }  \iota'\Delta_t=0} 
\gamma\Delta_t'\Sigma\omega_{t^+} + \frac{\gamma}{2}\Delta_t'\Sigma \Delta_t + \beta \norm{\Delta_t}_1.
\end{align}
The first-order conditions of the constrained optimization are
\begin{align}
\gamma\Sigma \underbrace{\left(\Delta_t + \omega_{t^+}\right)}_{\omega_{t+1}^*} + \beta \tilde g - \lambda \iota=0, \\ \Delta_t '\iota=0,
\end{align}
where $\tilde g$ is the vector of sub-derivatives of $\norm{\omega_{t+1} - \omega_{t^+}}_1$, i.e., $\tilde  g:=\partial \norm{\omega_{t+1} - \omega_{t^+}}_1 / \partial \omega_{t+1}$, consisting of elements which are $1$ or $-1$ in case $\omega_{t+1, i} - \omega_{t^+, i}>0$ or $\omega_{t+1, i} - \omega_{t^+, i}<0$, respectively, or $\in [-1,1]$ in case $\omega_{t+1, i} - \omega_{t^+, i}=0$.
Solving for $\omega_{t+1}^*$ yields
\begin{align}
\omega_{t+1}^* =\left(1+\frac{\beta}{\gamma}\iota'\Sigma^{-1} \tilde g\right)\omega^\text{mvp} -\frac{\beta}{\gamma}\Sigma^{-1} \tilde g,
\end{align}
where $\omega^\text{mvp} := \frac{1}{\iota'\Sigma^{-1}\iota}\Sigma^{-1}\iota$ corresponds to the weights of the GMV portfolio. Proposition  \ref{proposition:l1costs} shows that this optimization problem can be formulated as a standard  (regularized) minimum variance problem.

\begin{proposition}\label{proposition:l1costs}
	Portfolio optimization problem (\ref{al:EUL1}) is equivalent to the minimum variance problem with 
	\begin{align}
	\omega_{t+1} ^* = \argmin_{\omega\in\mathbb{R}^N,\text{ }  \iota'\omega=1} \omega' \Sigma_{\frac{\beta}{\gamma}} \omega,
	\end{align}
	where $\Sigma_{\frac{\beta}{\gamma}} = \Sigma + \frac{\beta}{\gamma}\left(g^*\iota' + \iota g^{* '}\right)
	$, and $g^*$ is the subgradient of $\norm{\omega^*_{t+1}-\omega_{t^+}}_1$.
\end{proposition}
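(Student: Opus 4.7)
The plan is to establish the claim by comparing first-order/KKT conditions of the two constrained optimization problems and showing that any solution of one satisfies the optimality conditions of the other. Since both problems are convex (the first is a convex quadratic plus an $L_1$ norm, and the second becomes a convex quadratic once we verify that $\Sigma_{\beta/\gamma}$ is positive semidefinite on the affine hull $\{\omega:\iota'\omega=1\}$), the KKT conditions are necessary and sufficient, so matching them is enough.

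First, I would reuse the KKT system already derived in the excerpt for problem (\ref{al:EUL1}): at the optimum $\omega_{t+1}^*$ there exists a subgradient $g^*\in\partial\|\omega_{t+1}^*-\omega_{t^+}\|_1$ and a multiplier $\lambda\in\mathbb{R}$ with
\begin{equation*}
\gamma\Sigma\omega_{t+1}^* + \beta g^* - \lambda\iota = 0, \qquad \iota'\omega_{t+1}^* = 1.
\end{equation*}
Next, I would write the Lagrangian for the reformulated problem $\min\omega'\Sigma_{\beta/\gamma}\omega$ subject to $\iota'\omega=1$, treating $g^*$ as \emph{fixed} at the value taken at the optimum of the first problem. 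The KKT conditions are $2\Sigma_{\beta/\gamma}\omega - \tilde\lambda\iota = 0$ and $\iota'\omega=1$. Substituting the definition of $\Sigma_{\beta/\gamma}$ and exploiting the constraint $\iota'\omega=1$, I would compute
\begin{equation*}
\Sigma_{\beta/\gamma}\omega = \Sigma\omega + \tfrac{\beta}{\gamma}g^*(\iota'\omega) + \tfrac{\beta}{\gamma}\iota(g^{*\prime}\omega) = \Sigma\omega + \tfrac{\beta}{\gamma}g^* + \tfrac{\beta}{\gamma}(g^{*\prime}\omega)\iota,
\end{equation*}
so the stationarity condition rearranges to $\gamma\Sigma\omega + \beta g^* = (\tfrac{\gamma\tilde\lambda}{2} - \beta g^{*\prime}\omega)\iota$. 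This is precisely the stationarity condition of the original problem after identifying $\lambda = \tfrac{\gamma\tilde\lambda}{2} - \beta g^{*\prime}\omega$. Hence $\omega_{t+1}^*$ together with an appropriate multiplier solves the reformulated KKT system, and conversely, any solution of the reformulated system solves the original one.

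To complete the proof I would argue uniqueness/equivalence by invoking strict convexity of $\omega'\Sigma\omega$ (assuming $\Sigma$ is positive definite) on the affine subspace $\iota'\omega=1$; together with the fact that $g^*$ is selected from the convex-valued subdifferential at $\omega_{t+1}^*$, this guarantees that the pair $(\omega_{t+1}^*, g^*)$ is well-defined and that the two minimizers coincide. I would also note briefly that $\Sigma_{\beta/\gamma}$ need not be globally positive definite, but its restriction to the tangent space of the constraint is (since the rank-two perturbation $g^*\iota' + \iota g^{*\prime}$ is dominated by $\Sigma$ on that subspace for the relevant values of $\beta/\gamma$), so the reformulated problem is genuinely a (regularized) minimum-variance problem.

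The main obstacle is conceptual rather than computational: because $g^*$ (and hence $\Sigma_{\beta/\gamma}$) is defined in terms of the very optimum we are characterizing, the statement is really a structural fixed-point identity, not an algorithm. I would emphasize in the writeup that the equivalence should be read as ``the optimal $\omega_{t+1}^*$ of the $L_1$-penalized problem also solves a minimum-variance problem with covariance matrix $\Sigma_{\beta/\gamma}$ whose shrinkage term is endogenously determined by the subgradient at the same $\omega_{t+1}^*$,'' thereby clarifying the role of $L_1$ turnover penalization as a data-dependent regularization of $\Sigma$.
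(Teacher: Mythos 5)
Your proposal is correct and takes essentially the same route as the paper's proof: match the KKT/first-order conditions of the two constrained problems, identify the Lagrange multipliers via $\lambda = \tfrac{\gamma\tilde\lambda}{2} - \beta g^{*\prime}\omega$, and close the argument with convexity, treating $g^*$ as the (endogenous) subgradient at the optimum --- the same technique used by Fan et al.\ (2012) for the gross-exposure result the paper explicitly parallels. One small simplification you could make: on the tangent space $\{\delta : \iota'\delta = 0\}$ the rank-two perturbation $g^*\iota' + \iota g^{*\prime}$ vanishes identically, so $\delta'\Sigma_{\frac{\beta}{\gamma}}\delta = \delta'\Sigma\delta$ for every feasible direction, and the reformulated problem is strictly convex on the constraint set for \emph{all} values of $\beta/\gamma$, not merely ``relevant'' ones as you hedge.
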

\begin{proof}
	See Section A.2 in the Online Appendix.
\end{proof}
\noindent The form of the matrix $\Sigma_{\frac{\beta}{\gamma}}$ implies that for high transaction costs $\beta$, more weight is put on those pairs of assets, whose exposure is rebalanced in the same direction. 
The result shows some similarities to \cite{Fan.2012}, who illustrate that the risk minimization problem with constrained weights
\begin{align}\label{equ:fanregularization}
\omega_{t+1} ^{\vartheta}&=\argmin_{\omega\in\mathbb{R}^N,\text{ }  \iota'\omega=1, \text{ } ||\omega||_1 \le \vartheta} \omega'\Sigma \omega
\end{align}
can be interpreted as the minimum variance problem
\begin{align}\label{equ:fanregularization2}
\omega_{t+1} ^{\vartheta}&=\argmin_{\omega\in\mathbb{R}^N,\text{ }  \iota'\omega=1} \omega'\Sigma^{\vartheta} \omega,
\end{align}
with
$
\Sigma^{\vartheta}: =\Sigma + \lambda \left(g \iota' + \iota g'\right),
$
where $\lambda$ is a Lagrange multiplier and $g$ is the subgradient vector of the function $\norm{\omega}_1$ evaluated at the the solution of optimization problem (\ref{equ:fanregularization}).
Note, however, that in our case, the transaction cost parameter $\beta$ is given to the investor, whereas $\vartheta$ is an endogenously imposed restriction with the aim to decrease the impact of estimation error.

Investigating the long-run effect of the initial portfolio $\omega_0$ in the presence of $L_1$ transaction costs in the spirit of Proposition \ref{proposition:banach} is complex as analytically tractable representations are not easily available.
General insights from the $L_2$ benchmark case, however, can be transferred to the setup with $L_1$ transaction costs: First, a high cost parameter $\beta$ may prevent the investor from implementing the efficient portfolio. 
Second, as the $L_2$ norm of any vector is bounded from above by its $L_1$ norm, $L_1$ penalization is always stronger than in the case of quadratic transaction costs. 
Therefore, we expect that the convergence of portfolios from the initial allocation $\omega_0$ towards the efficient portfolio is generally slower, but qualitatively similar.

\subsubsection{Empirical implications}
To illustrate the effects discussed above, we compute the performance of portfolios after transaction costs based on $N = 308$ assets and daily readjustments based on data ranging from June $2007$ to March $2017$.\footnote{A description of the dataset and the underlying estimators is given in more detail in Section \ref{sec:empirics.data}.} 
The unknown covariance matrix $\Sigma_t$ is estimated in two ways: We compute the sample covariance estimator $S_t$ and the shrinkage estimator $\hat \Sigma ^{t,\text{Shrink}}$ by \cite{Ledoit.2003, Ledoit.2004} with a rolling window of length $h = 500$ days. We refrain from estimating the mean and set $\mu_t=0$. 
The initial portfolios weights are set to $\frac{1}{N}\iota$, corresponding to the naive portfolio.
Then, for a fixed  $\beta$ and daily estimates of $\Sigma_t$, portfolio weights are rebalanced as solutions of optimization problem \eqref{equ:maximization_naive} using $\gamma=4$. This yields a time series of optimal portfolios $\omega_t^\beta$ and realized portfolio returns $r^\beta_t := r_t ^\prime \omega_t^\beta$. Subtracting transaction costs then yields the realized portfolio returns net of transaction costs $r_t^{\beta, \text{nTC}} :=r^\beta_t - \beta \norm{\omega^\beta_{t+1} - \omega^\beta_{t^+}}_2$. 

Figure \ref{fig:sharpe_ratio_varying_beta} displays annualized Sharpe ratios, computed as the ratio of the annualized sample mean and standard deviation of $r_t^{\beta, \text{nTC}}$,  after subtracting transaction costs for both the sample covariance and the shrinkage estimator for a range of different values of $\beta$, measured in basis points. 
As a benchmark, we provide the corresponding statistics for the naive portfolio without any rebalancing (buy and hold) and a naive version which reallocates weights back to $\frac{1}{N}\iota$ on a daily basis. 
We report values of $\beta$ ranging from $1$bp to $1,000,000$bp. 
To provide some guidance for the interpretation of these values, note that, for instance, for the naive allocation with daily readjustment, the adjusted returns after transaction costs are 
\begin{equation}
	r_{t+1}^\text{Naive, nTC} := r_{t+1}^\text{Naive} - \beta\norm{\omega_{t+1} - \omega_{t^+}}_2
	= \bar{r}_{t+1} - \beta \frac{1}{N}\frac{1}{\left(1+ \bar{r}_{t+1} \right)^2}\bar{\sigma}_{t+1}^2,
\end{equation}
where $\bar{r}_t$ is the day $t$ average simple return across all $N$ assets and $\bar{\sigma}_t^2 = 1/N\sum_{i=1}^{N}\left(r_{t,i}-\bar{r}_t\right)^2$.
Using sample estimates based on our data, penalizing turnover such that average daily returns of the naive portfolio are reduced by one basis point (which we consider very strict for the naive allocation), $\beta$ should be at least in the magnitude of $3,000$ bp.\footnote{In our sample ($N=308$), the average daily return $\bar{r}$ is roughly $3bp$. The average variance $\bar{\sigma}^2$ is of the same magnitude.} 
These considerations show that scenarios with $\beta<100$ can be associated with rather small transaction costs.

Nonetheless, we observe that the naive portfolio is clearly outperformed for a wide range of values of $\beta$. 
This is remarkable, as it is well-known that parameter uncertainty especially in high dimensions often leads to superior performance of the naive allocation (see, e.g. \cite{DeMiguel.2009b}). 
We moreover find that already very small values of $\beta$ have a strong regulatory effect on the covariance matrix. Recall that quadratic transaction costs directly affect the diagonal elements and thus the eigenvalues of $\Sigma_t$. 
Inflating each of the $308$ eigenvalues by a small magnitude has a substantial effect on the conditioning of the covariance matrix. 
We observe that transaction costs of just 1 bp significantly increase the conditioning number and strongly stabilize $S_t$ and particularly its inverse, $S_t^{-1}$. 
In fact, the optimal portfolio based on the sample covariance matrix, which adjusts ex-ante for transaction costs of $\beta = 1 bp$, leads to a net-of-transaction-costs Sharpe ratio of $0.51$, whereas neglecting transaction costs in the optimization yields a Sharpe ratio of only $0.21$.
This effect is to a large extent a pure regularization effect. 
For rising values of $\beta$, this effect marginally declines and leads to a declining performance for values of $\beta$ between $10$bp and $50$bp.
We therefore observe a dual role of transaction costs. 
On the one hand, they improve the conditioning of the covariance matrix by inflating the eigenvalues. 
On the other hand, they reduce the mean portfolio return. 
Both effects influence the Sharpe ratio in opposite direction causing the concavity of graphs for values of $\beta$ up to approximately $50$bp. 

For higher values of $\beta$ we observe a similar pattern: Here, a decline in rebalancing costs due to the implied turnover penalization kicks in and implies an increase of the Sharpe ratio. 
If the cost parameter $\beta$, however, gets too high, the adverse effect of transaction costs on the portfolio's expected returns dominate. 
Hence, as predicted by Proposition \ref{proposition:limit}, the allocation is ultimately pushed back to the performance of the buy and hold strategy (with initial naive allocation $1/N$). 
This is reflected by the second panel of Figure \ref{fig:sharpe_ratio_varying_beta}, showing the average $L_1$ distance to the buy and hold portfolio weights.
 
The described effects are much more pronounced for the sample covariance than for its shrunken counterpart. 
As the latter is already regularized, additional regulation implied by turnover penalization obviously has a lower impact.  
Nevertheless, turnover regularization implies that forecasts even based on the sample covariance generate reasonable Sharpe ratios, which tend to perform equally well than those implied by a (linear) shrinkage estimator.
The third panel of Figure \ref{fig:sharpe_ratio_varying_beta} illustrates the average turnover (in percent). Clearly, with increasing $\beta$, trading activity declines as shifting wealth becomes less desirable. 
Hence, we observe that differences in the out-of-sample performance between the two approaches decline if the turnover regularization becomes stronger.

Our motivation for reporting Sharpe ratios in Figure \ref{fig:sharpe_ratio_varying_beta} originates from the fact that the Sharpe ratio is a very common metric for portfolio evaluation in practice. 
Note, however, that the maximization of the Sharpe ratio is \emph{not} guaranteed by our underlying objective function \eqref{al:EU} that maximizes expected utility. 
Recall that in the latter framework, for a given $\beta$, allocation $\omega_{t+1}^*$ is optimal. 
If, however, one leaves this ground and aims at a different objective, such as, e.g., the maximization of the Sharpe ratio, one may be tempted to view $\beta$ as a free parameter which needs to be chosen such that it maximizes the out-of-sample performance in a scenario where turnover with transaction costs $v_t\left(\omega_{t+1},\omega_{t^+},\tilde{\beta}\right)$ is penalized \textit{ex post}. 
Hence, in such a setting it might be useful to distinguish between an 
(endogenously chosen) \textit{ex-ante} parameter $\beta$ which differs from the (exogenously) given transaction cost parameter $\tilde{\beta}$ used in the \textit{ex-post} evaluation. 
In Section A.3 of the Online Appendix we provide some evidence for this effect: Instead of computing the weights using the theoretically optimal penalty factor $\beta$, we compute Sharpe ratios for weights based on \textit{sub-optimal ex-ante} $\tilde\beta$s. 
The results in Table A.1 show that small values for an \textit{ex-ante} $\beta$  are sufficient to generate high Sharpe ratios, even if the \textit{ex-post} $\tilde\beta$ chosen for the evaluation is much higher. 
Furthermore, the table suggests that choosing $\beta=\tilde{\beta}$ does not necessarily provide the strongest empirical performance: 
In our setup, $\beta = 25 bp$ leads to the highest Sharpe ratios even if $\tilde{\beta}$ is much higher. 

Repeating the analysis based on proportional ($L_1$) transaction costs delivers qualitatively similar results.
For low values of $\beta$, the Sharpe ratio increases in $\beta$ as the effects of covariance regularization and a reduction of turnover overcompensate the effect of declining portfolio returns (after transaction costs). 
Overall, the effects, however, are less pronounced than in the case of quadratic transaction costs.\footnote{Note that $L_1$ transaction costs imply a regularization which acts similarly to a Lasso penalization. 
Such a penalization implies a strong dependence of the portfolio weights on the previous day's allocation.
This affects our evaluation as the paths of the portfolio weights may differ substantially over time if the cost parameter $\beta$ is slightly changed. 
A visualization of the performance for the $L_1$ case similarly to Figure \ref{fig:sharpe_ratio_varying_beta} is provided in Section A.4 of the Online Appendix (Figure A.1).}

\section{Basic Econometric Setup}\label{sec:combinations}

The optimization problem \eqref{al:EU} posses the challenge of providing a sensible density $p _t(r_{t+1}|\mathcal{D})$ of future returns. The predictive density should reflect dynamics of the return distribution in a suitable way, which opens many different dimensions on how to choose a model $\mathcal{M}_k$. 
The model $\mathcal{M}_k$ reflects assumptions regarding the return generating process in form of a likelihood function $\mathcal{L}\left(r_t|\Theta, \mathcal{H}_t, \mathcal{M}_k\right)$ depending on unknown parameters $\Theta$. 
Assuming that future returns are distributed as $\mathcal{L}\left(r_t|\hat\Theta, \mathcal{H}_t, \mathcal{M}_k\right)$, where $\hat \Theta$ is a point estimate of the parameters $\Theta$, however, would imply that the uncertainty perceived by the investor ignores estimation error, see, e.g. \cite{Kan.2007}. 
Consequently, the resulting portfolio weights would be sub-optimal.

To accommodate parameter uncertainty and to impose a setting, where the optimization problem \eqref{al:EU} can be naturally addressed by numerical integration techniques, we employ a Bayesian approach. 
Hence, by defining a model $\mathcal{M}_k$ implying the likelihood $\mathcal{L}\left(r_t|\Theta, \mathcal{H}_t, \mathcal{M}_k\right)$ and choosing a prior distribution $\pi(\Theta)$,
the posterior distribution 
\begin{equation}
\pi(\Theta|R_{t},\mathcal{H}_t, \mathcal{M}_k)\propto\mathcal{L}(R_t|\Theta, \mathcal{H}_t, \mathcal{M}_k)\pi(\Theta)
\end{equation}
reflects beliefs about the distribution of the parameters after observing the set of available information, $\left(R_t, \mathcal{H}_t\right)$.
The (posterior) predictive distribution of the returns is then given by
\begin{align}\label{equ:posteriorpredictive}
r_{\mathcal{M}_k, t+1} \sim p(r_{t+1}|R_t, \mathcal{H}_t, \mathcal{M}_k):=\int \mathcal{L}(r_{t+1}|\Theta, \mathcal{H}_t, \mathcal{M}_k)\pi(\Theta|R_{t},\mathcal{H}_t, \mathcal{M}_k) d\Theta.
\end{align}
If the precision of the parameters estimates is low, the posterior distribution $\pi(\Theta|R_{t},\mathcal{H}_t, \mathcal{M}_k)$ yields a predictive return distribution with more mass in the tails than focusing only on $\mathcal{L}r_{t+1}|\hat \Theta, \mathcal{H}_t, \mathcal{M}_k$. 

Moreover, potential structural changes in the return distribution and time-varying parameters make it hard to identify a single predictive model which consistently outperforms all other models. 
Therefore, an investor may instead combine predictions of $K$ distinct predictive models $\mathcal{D}:=\left\{\mathcal{M}_1,\ldots,\mathcal{M}_K\right\}$, reflecting either personal preferences, data availability or theoretical considerations.\footnote{
Model combination in the context of return predictions has a long tradition in econometrics, starting from \cite{Bates.1969}. See also \cite{Stock.1999, Stock.2002, Weigend.2000, Bao.2007} and \cite{Hall.2007}. 
In finance, \cite{Avramov.2003} and \cite{Uppal.2003}, among others, apply model combinations and investigate the effect of model uncertainty on financial decisions.}
Stacking the predictive distributions yields
\begin{equation}
r^\text{vec}_{\mathcal{D},t+1} := \text{vec}\left(\left\{r_{\mathcal{M}_1,t+1}, \ldots,r_{\mathcal{M}_K,t+1}\right\}\right) \in\mathbb{R}^{NK \times 1}.
\end{equation} 
The joint predictive distribution $p_t(r_{t+1}|\mathcal{D})$ is computed conditionally on combination weights $c_t\in\mathbb{R}^K$, which can be interpreted as discrete probabilities over the set of models $\mathcal{D}$. The probabilistic interpretation of the combination scheme is justified by enforcing that all weights take positive values and add up to one,
\begin{equation}
c_t\in\Delta_{[0,1]^K}:=\left\{c\in \mathbb{R}^K: c_i\geq 0 \text{ }\forall i=1,\ldots,K \text{ and } \sum_{i=1}^{K}c_i =1 \right\}.\end{equation}
This yields the joint predictive distribution
\begin{equation}
p_t(r_{t+1}|\mathcal{D}) := p(r_{t+1}|R_t,\mathcal{H}_t,\mathcal{D})= \int \left(c_t \otimes I\right)' p\left(r^\text{vec}_{\mathcal{D}, t+1}|R_t,\mathcal{H}_t,\mathcal{D}\right)d r^\text{vec}_{\mathcal{D}, t+1},
\end{equation}
corresponding to a mixture distribution with time-varying weights.
Depending on the choice of the combination weights $c_t$, the scheme balances how much the investment decision is driven by each of the individual models. 
Well-known approaches to combine different models are, among many others, Bayesian model averaging  \citep{Hoeting.1999}, decision-based model combinations \citep{Billio.2013} and optimal prediction pooling \citep{Geweke.2011b}.
In line with the latter, we focus on evaluating the 
goodness-of-fit of the predictive distributions as a measure of predictive accuracy based on rolling-window maximization of the predictive log score,
\begin{equation}\label{equ:optimal_c_logpredictivedistribution}
c^\text{*}_t=\argmax_{c \in \Delta_{[0,1]^K}}\sum\limits_{i=t-{h_c}}^t\log \left[\sum\limits_{k=1}^{K} c_{k} p\left(r_i |R_{i-1},\mathcal{H}_{i-1}, \mathcal{M}_k \right)\right], 
\end{equation}
where $h_c$ is the window size and $p\left(r_i |R_{i-1},\mathcal{H}_{i-1}, \mathcal{M}_k \right)$ is defined as equation in  \eqref{equ:posteriorpredictive}.\footnote{In our empirical application, we set $h_c=250$ days. } If the predictive density concentrates around the observed return values, the predictive likelihood is higher.\footnote{Alternatively, we implemented utility-based model combination in the spirit of \cite{Billio.2013} by choosing $c^*_t$ as a function of past portfolio-performances net of transaction costs. However, the combination scheme resulted in very instable combination weights, putting mas on corner solutions. We therefore refrain from reporting results.}
In general, the objective function of the portfolio optimization problem (EU) is not available in closed form.
Furthermore, the posterior predictive distribution may not arise from a well-known class of probability distributions. Therefore, the computation of portfolio weights depends on (Bayesian) computational methods. We employ MCMC methods to generate draws from the posterior predictive distribution $r_{\mathcal{M}_k,t+1}^{(j)}$ and approximate the integral in optimization problem (EU) by means of Monte-Carlo techniques to compute an optimal portfolio $\hat{\omega}_{t+1}^{\mathcal{M}_k}$.
A detailed description on the computational implementation of this approach is provided in Section A.5 of the Online Appendix.

\section{Predictive Models} \label{sec:models}

As predictive models we choose representatives of three major model classes. First, we include covariance forecasts based on \emph{high-frequency} data utilizing blocked realized kernels as proposed by \cite{Hautsch.2012}. Second, we employ predictions based on parametric models for $\Sigma_t$ using \emph{daily} data. An approach which is sufficiently flexible, while guaranteeing well-conditioned covariance forecasts, is a stochastic volatility factor model according to \cite{Chib.2006}.\footnote{Thanks to the development of numerically efficient simulation techniques by \cite{Kastner.2017},  (MCMC-based) estimation is tractable even in high dimensions. This makes the model becoming one of the very few parametric models (with sufficient flexibility) which are feasible for data of these dimensions.} 
Third, as a candidate representing the class of shrinkage estimators, we employ an approach based on \cite{Ledoit.2003, Ledoit.2004}.

The choice of models is moreover driven by computational tractability in a large-dimensional setting requiring numerical integration through MCMC techniques and in addition a portfolio bootstrap procedure as illustrated in Section \ref{sec:empirics.data}. We nevertheless believe that these models yield major empirical insights, which can be easily transfered to modified or extended approaches.

\subsection{A Wishart Model for Blocked Realized Kernels} \label{sec:BRK}
Realized measures of volatility based on HF data have been shown to provide accurate estimates of daily volatilities and covariances.\footnote{See, e.g., \cite{Andersen.1998}, \cite{Andersen.2003} and \cite{BarndorffNielsen.2009}, among others.}
To produce forecasts of covariances based on HF data, we employ blocked realized kernel (BRK) estimates as proposed by \cite{Hautsch.2012} to estimate the quadratic variation of the price process based on irregularly spaced and noisy price observations. 

The major idea is to estimate the covariance matrix block-wise. 
Stocks are separated into 4 equal-sized groups according to their average number of daily mid-quote observations. The resulting covariance matrix is then decomposed into $b=10$ blocks representing pair-wise correlations within each group and across groups.\footnote{\cite{Hautsch.2015} find that 4 liquidity groups constitutes a reasonable (data-driven) choice for a similar data set. We implemented the setting for up to 10 groups and find similar results in the given framework. }
We denote the set of indexes of the assets associated with block $b \in 1,\ldots,10$ by $\mathcal{I}_b$. For each asset $i$, $\tau_{t,l}^{(i)}$ denotes the time stamp of mid-quote $l$ on day $t$. 
The so-called refresh time is the time it takes for all assets in one block to observe at least one mid-quote update and is formally defined as 
\begin{align}r\tau ^b _{t,1}:=\max_{i \in \mathcal{I}_b}\left\{\tau_{t,1} ^{(i)}\right\}, \hspace{2 cm}r\tau ^b _{t,l+1}:=\max_{i \in \mathcal{I}_b}\left\{\tau_{t,N^{(i)}(r\tau ^b _{t,l})} ^{(i)}\right\},\end{align}
where $N^{(i)} (\tau)$ denotes the number of midquote observations of asset $i$ before time $\tau$.
Hence, refresh time sampling synchronizes the data in time with $r\tau^b _{t,l}$ denoting the time, where all of the assets belonging to block $b$ have been traded at least once since the last refresh time $r\tau^b _{t,l-1}$.
Synchronized log returns are then obtained as $\tilde{r}^{(i)}_{t,l}:=\tilde p^{(i)} _{r\tau^b _{t,l}}-\tilde p^{(i)} _{r\tau^b _{t,l-1}}$, with $\tilde p^{(i)} _{r\tau^{b}_{t,l}}$ denoting the log mid-quote of asset $i$ at time $r\tau^{b}_{t,l}$. 

Refresh-time-synchronized returns build the basis for the multivariate realized kernel estimator by \cite{BarndorffNielsen.2011}, which allows (under a set of assumptions) to consistently estimate the quadratic covariation of an underlying multivariate Brownian semi-martingale price process which is observed under noise. Applying the multivariate realized kernel on each block of the covariance matrix, we obtain 
\begin{align}K_t ^b := \sum\limits_{h=-L_t ^b} ^{L_t ^b} k\left(\frac{h}{L_t ^b +1 }\right) \Gamma_t ^{h,b},\end{align}
where $k( \cdot )$ is the Parzen kernel, $\Gamma_t ^{h,b}$ is the $h$-lag auto-covariance matrix of the assets log returns belonging to block $\mathcal{I}_b$, and $L_t^b$ is a bandwidth parameter, which is optimally chosen according to \cite{BarndorffNielsen.2011}. 
The estimates of the correlations between assets in block $b$ take the form
\begin{align}\hat{H}_t^\text{b} = \left(V_t ^\text{b}\right)^{-1} K_t ^\text{b} \left(V_t ^\text{b}\right)^{-1},\hspace{1 cm}V_t ^\text{b} = \text{diag }\left[K_t ^{b}\right]^{1/2}.\end{align} 
The blocks $\hat{H}_t ^\text{b}$ are then stacked as described in \cite{Hautsch.2012} to obtain the correlation matrix $\hat{H}_t$.
The diagonal elements of the covariance matrix, $\hat\sigma^2_{t,i}$, $i=1,\ldots,N$, are estimated based on univariate realized kernels according to \cite{BarndorffNielsen.2008}. The resulting variance-covariance matrix is then given by
\begin{align}\hat{\Sigma}_t^\text{BRK} = \text{diag} \left(\hat\sigma^2 _{t,1},\ldots,\hat\sigma^2 _{t,N}\right)^{1/2} \hat{H}_t \text{diag} \left(\hat\sigma^2 _{t,1},\ldots,\hat\sigma^2 _{t,N}\right)^{1/2}.\end{align} 
We stabilize the covariance estimates by smoothing over time and computing simple averages of the last $5$ days, i.e., 
$\hat{\Sigma}_{S,t} ^{BRK} := (1/5)\sum_{s=1}^5 \hat{\Sigma}_{t-s+1} ^{BRK}$. 
Without smoothing, HF-data based forecasts would be prone to substantial higher fluctuations, especially on days with extraordinary intra-daily activity such as on the Flash Crash in May 2010. We find that these effects reduce the predictive ability.

We parametrize a suitable return distribution, which is driven by the dynamics of $\hat{\Sigma}_{S,t} ^{BRK}$  and is close in the spirit to \cite{Jin.2013}. 
The dynamics of the predicted return process conditional on the latent covariance $\Sigma_{t}$ are modeled as multivariate Gaussian.
To capture parameter uncertainty, integrated volatility is modeled as an inverse Wishart distribution.\footnote{This represents a multivariate extension of the normal-inverse-gamma approach, applied, for instance, by \cite{Andersson.2001} and \cite{Forsberg.2002}.} 
Thus, the model is defined by:
\begin{align}
\mathcal{L}(r_{t+1}|\Sigma_{t+1}) &\sim N(0,\Sigma_{t+1}), \\
\Sigma_{t+1}|\kappa,B_{t} &\sim W^{-1}_N(\kappa,B_{t}), \\
\kappa B_t &= \hat \Sigma^\text{BRK} _{S,t},\\
\kappa &\sim \exp\left(100\right)\mathbbm{1}_{\kappa>N-1},
\end{align}
where $W^{-1}_N(\kappa,B_{t})$ denotes a $N$-dimensional inverse-Wishart distribution with degrees of freedom $\kappa$ and scale matrix $B_t$, $\exp\left(\lambda\right)$ denotes an exponential distribution with rate $\lambda(=100)$ and $\mathbbm{1}_{\kappa>N-1}$ denotes a truncation at $N-1$.
Though we impose a Gaussian likelihood conditional on $\Sigma_{t+1}$, the posterior predictive distribution of the returns exhibit fat tails after marginalizing out $\Sigma_{t+1}$ due to the choice of the prior. 
\subsection{Stochastic Volatility Factor Models}\label{sec:MVSFV}
Parametric models for return distributions in very high dimensions accommodating time variations in the covariance structure are typically either highly restrictive or computationally (or numerically) not feasible. 
Even dynamic conditional correlation (DCC) models as proposed by \cite{Engle.2002} are not feasible for processes including several hundreds of assets.\footnote{One notable exception is a shrinkage version of the DCC model as recently proposed by \cite{Engle.2017}.} 
Likewise, stochastic volatility (SV) models allow for flexible (factor) structures but have been computationally not feasible for high-dimensional processes either. 
Recent advances in MCMC sampling techniques, however, make it possible to estimate stochastic volatility factor models even in very high dimensions while keeping the numerical burden moderate. 

Employing interweaving schemes to overcome well-known issues of slow convergence and high autocorrelations of MCMC samplers for SV models, \cite{Kastner.2017} propose means to reduce the enormous computational burden for high dimensional estimations of SV objects.
We therefore assume a stochastic volatility factor model in the spirit of \cite{Shephard.1996}, \cite{Jacquier.2002} and \cite{Chib.2006} as given by 
\begin{align}
\tilde{r}_t=\Lambda V(\xi_t)^{1/2}\zeta_t+Q(\xi_t)^{1/2}\varepsilon_t,
\end{align}
where $\Lambda$ is a $N \times j$ matrix of unknown factor loadings, $Q(\xi_t)=\text{diag}\left(\exp(\xi_{1,t}),\ldots,\exp(\xi_{N,t})\right)$ is a $N\times N$ diagonal matrix of $N$ latent factors capturing idiosyncratic effects. The $j \times j$ diagonal matrix $V(\xi_t)=\text{diag}\left(\exp(\xi_{N+1,t}),\ldots,\exp(\xi_{N+j,t})\right)$ captures common latent factors. The innovations $\varepsilon_t \in\mathbb{R}^N$ and $\zeta_t \in\mathbb{R}^j$ are assumed to follow independent standard normal distributions.
The model thus implies that the covariance matrix of $r_t$ is driven by a factor structure
\begin{align}
\text{cov}\left(r_t|\xi_t\right)=\Sigma_t(\xi_t)=\Lambda V_t(\xi_t)\Lambda'+Q_t(\xi_t),
\end{align} 
with $V_t(\xi_t)$ capturing common factors and $Q_t(\xi_t)$ capturing idiosyncratic components.
The covariance elements are thus parametrized in terms of the $N \times j$ unknown parameters, whose dynamics are triggered by j common factors. 
All $N+j$ latent factors are assumed to follow AR(1) processes,
\begin{align}
\xi_{it}=\mu_i+\phi_i(\xi_{t-1, i}-\mu_i)+\sigma_i\eta_{t, i} \text{ } i=1,\ldots,N+j,
\end{align}
where the innovations $\eta_t$ follow independent standard normal distributions and $\xi_{i0}$ is an unknown initial state. 
The AR(1) representation captures the persistence in idiosyncratic volatilities and 
correlations.
The assumption that all elements of the covariance matrix are driven by identical dynamics is obviously restrictive, however, yields parameter parsimony even in high dimensions. Estimation errors can therefore be strongly limited and parameter uncertainty can be straightforwardly captured by choosing appropriate prior distributions for $\mu_i$, $\phi_i$ and $\sigma_i$. 
The approach can be seen as a strong parametric regularization of the covariance matrix which, however, still accommodates important empirical features.
Furthermore, though the joint distribution of the data is conditionally Gaussian,
the stationary distribution exhibits thicker tails. 
The priors for the univariate stochastic volatility processes, $\pi(\mu_i, \phi_i, \sigma_i)$ are in line with \cite{Aguilar.2000}:  
The level $\mu_i$ is equipped with a normal prior, the persistence parameter $\phi_i$ is chosen such that $(\phi_i+1)/2 \sim B(a_0, b_0)$, which enforces stationarity, and for $\sigma^2 _i$ we assume $\sigma^2 _i \sim  G\left(\frac{1}{2}, \frac{1}{2B_\sigma}\right)$.\footnote{In the empirical application we set the prior hyper-parameters to $a_0 = 20, b_0 = 1.5$ and $B_\sigma = 1$ as proposed by \cite{Kastner.2017}. See \cite{Kastner.2016} for further details.} 
For each element of the factor loadings matrix, a hierarchical zero-mean Gaussian distribution is chosen.

\subsection{Covariance Shrinkage}\label{sec:LW}
The most simple and natural covariance estimator is the rolling window sample covariance estimator, 
\begin{align}S_t:=\frac{1}{h-1}\sum\limits_{i=t-h}^t \left(r_i-\hat{\mu}_t\right)\left(r_i-\hat{\mu}_t\right)', \end{align}
with $\hat{\mu}_t := \frac{1}{h-1}\sum_{i=t-h}^t r_i$, and estimation window of length $h$.
It is well-known that $S_t$ is highly inefficient and yields poor asset allocations as long as $h$ does not sufficiently exceed $N$. To overcome this problem,  \cite{Ledoit.2003, Ledoit.2004} propose shrinking $S_t$ towards a more efficient (though biased) estimator of $\Sigma_t$.\footnote{Instead of shrinking the eigenvalues of $S_t$ linearly, an alternative approach would be the implementation of non-parametric shrinkage in the spirit of \cite{Ledoit.2012}. This is left for future research.} The classical linear shrinkage estimator is given by 
\begin{align}\hat{\Sigma}^\text{t,Shrink} = \hat{\delta}F_t +(1-\hat{\delta})S_t,\end{align} 
where $F_t$ denotes the sample constant correlation matrix and $\hat{\delta}$ minimizes the Frobenius norm between $F_t$ and $S_t$. 
$F_t$ is based on the sample correlations $\hat{\rho}_{ij}:=\frac{s_{ij}}{\sqrt{s_{ii}s_{jj}}}$, where $s_{ij}$ is the $i$-th element of the $j$-th column of the sample covariance matrix $S_t$. 
The average sample correlations are given by $\bar{\rho}:=\frac{2}{(N-1)N}\sum\limits_{i=1}^{N}\sum\limits_{j=i+1}^{N-1}\hat{\rho}_{ij}$ yielding the $ij$-th element of 
$F_t$ as $F_{t, ij}=\bar{\rho}\sqrt{\hat{\rho}_{ii}\hat{\rho}_{jj}}$.
Finally, the resulting predictive return distribution is obtained by assuming  $p_t\left(r_{t+1}|\hat{\Sigma}^{t, Shrink}\right) \sim N \left(0,\hat{\Sigma}^{t, Shrink}\right)$. 
Equivalently, a Gaussian framework is implemented for the sample covariance matrix, $p_t(r_{t+1}|S_t) \sim N (0,S_t)$. 
Hence, parameter uncertainty is only taken into account through the imposed regularization of the sample covariance matrix. 
We refrain from imposing additional prior distributions to study the effect of a pure covariance regularization and to facilitate comparisons with the sample covariance matrix. 

\section{Empirical Analysis} \label{sec:empirics} 
\subsection{Data and General Setup}\label{sec:empirics.data} 
In order to obtain a representative sample of US-stock market listed firms, we select all constituents from the S\&P 500 index, which have been traded during the complete time period starting in June 2007, the earliest date for which corresponding HF-data from the LOBSTER database is available. 
This results in a total dataset containing $N=308$ stocks listed at Nasdaq.\footnote{Exclusively focusing on stocks, which are continuously traded through the entire period is a common proceeding in the literature and implies some survivorship bias and the negligence of younger companies with IPO's after 2007. In our allocation approach, this aspect could be in principle addressed by including \emph{all} stocks from the outset and a priori imposing zero weights to stocks in periods, when they are not (yet) traded.} The data covers the period  from June 2007 to March 2017, corresponding to 2,409 trading days after excluding weekends and holidays. 
Daily returns are computed based on end-of-day prices.\footnote{Returns are computed as log-returns in our analysis. As we work with daily data, however, the difference to simple returns is negligible.} 
All the computations are performed after adjusting for stock splits and dividends. 
We extend our data set by HF-data extracted from the LOBSTER database\footnote{See https://lobsterdata.com.}, which provides tick-level message data for every asset and trading day. Utilizing midquotes amounts to more than $73$ billion observations.

In order to investigate the prediction power and resulting portfolio performance of our models, we sequentially generate forecasts on a daily basis and compute the corresponding paths of portfolio weights. More detailed descriptions of the dataset and the computations are provided in Section A.6 of the Online Appendix.
The distinct steps of the estimation and forecasting procedure are as follows: 
We implement $K=4$ different models as of Section \ref{sec:models}. 
The HF approach is based on the smoothed BRK-Wishart model with 4 groups, the SV model is based on $j=3$ common factors, while forecasts based on the sample covariance matrix $S_t$ and its regularized version $\hat{\Sigma}^\text{t, Shrink}$ are computed using a rolling window size of 500 days.\footnote{The predictive accuracy of the SV model is very similar for values of $j$ between $1$ and $5$, but declines when including more factors.} 
 
Moreover, in line with a wide range of studies utilizing daily data, we refrain from predicting mean returns but assume $\mu_t=0$ in order to avoid excessive estimation uncertainty. 
Therefore, we effectively perform global minimum variance optimization under transaction costs and parameter uncertainty as well as model uncertainty.
We sequentially generate (MCMC-based) samples of the predictive return distributions to compute optimal weights for every model. For the model combinations, we sequentially update the combination weights based on past predictive performance to generate the optimal allocation vector $\hat{\omega}_{t+1}^\mathcal{D}$.

In order to quantify the robustness and statistical significance of our results, we perform a bootstrap analysis by re-iterating the procedure described above  $200$ times for random subsets of $N=250$ stocks out of the total $308$ assets.\footnote{
The forecasting and optimization procedures require substantial computing resources. As the portfolio weights at $t$ depend on the allocation at day $t-1$, parallelization is restricted. Sequentially computing the multivariate realized kernels for every trading day, running the MCMC algorithm, performing numerical integration and optimizing in the high-dimensional asset space for all models requires more than one month computing time on a strong cluster such as the Vienna Scientific Cluster.} 

\subsection{Evaluation of the Predictive Performance}
In a first step, we evaluate the predictive performance of the individual models. Note that this step does not require to compute any portfolio weights.\footnote{A detailed visual analysis of the model's forecasts of the high-dimensional return distribution based on generated samples from the posterior predictive distribution is provided in Section A.7 of the Online Appendix.} 
A popular metric to evaluate the performance of predictive distributions is the log posterior predictive likelihood $\log p(r_t^O |R_{t-1},\mathcal{H}_{t-1}, \mathcal{M}_k )$, where $r_{t}^O$ are the observed returns at day $t$, indicating how much probability mass the predictive distribution assigns to the observed outcomes. 
Table \ref{tab:predlikelihood} gives summary statistics of the (daily) time series of the \emph{out-of-sample} log posterior predictive likelihood $\log p(r_t^O |R_{t-1},\mathcal{H}_{t-1}, \mathcal{M}_k )$ for each model. 

In terms of the mean posterior predictive log-likelihood obtained in our sample, the sample covariance matrix solely is not sufficient to provide accurate forecasts. 
Shrinking the covariance matrix, significantly increases its forecasting performance. 
Both estimators, however, still significantly under-perform the SV and HF model. 
The fact that the SV model performs widely similarly to the HF model is an interesting finding as it utilizes only daily data and thus much less information than the latter. 
This disadvantage, however, seems to be overcompensated by the fact that the model captures daily dynamics in the data and thus straightforwardly produces one-step-ahead forecasts. 
In contrast, the HF model produces accurate estimates of $\Sigma_t$, but does not allow for any projections into the future. Our results show that both the efficiency of estimates $\Sigma_t$ and the incorporation of daily dynamics are obviously crucial for superior out-of-sample predictions. 
The fact that both the SV model and the HF model perform widely similar indicates that the respective advantages and disadvantages of the individual models counterbalance each other.\footnote{We thus expect that an appropriate dynamic forecasting model for vast-dimensional covariances, e.g., in the spirit of \cite{Hansen.2012}, may perform even better in terms of the mean posterior predictive log-likelihood, but may contrariwise induce additional parameter uncertainty. 
Given that it is not straightforward to implement such a model in the given general and high-dimensional setting, we leave such an analysis to future research.} 

The last row of Table \ref{tab:predlikelihood} gives the obtained out-of-sample predictive performance of the model combination approach as discussed in Section \ref{sec:combinations}. Computing combination weights $c^*_t$ as described in \eqref{equ:optimal_c_logpredictivedistribution} and evaluating the predictive density, $
LS_{t+1}^\text{Comb.}:= \log \left(\sum\limits_{k=1}^{K} c_{t,k} p\left(r_{t+1}^O |R_{t},\mathcal{H}_{t}, \mathcal{M}_k \right)\right)
$, reflects a significantly stronger prediction performance through the entire sample. 
Combining both high- and low-frequency based approaches thus increases the predictive accuracy and outperforms all individual models.
Thus, our results confirm also in a high-dimensional setting the benefits of mixing high- and low-frequency data for financial time series, documented among others, by \cite{Ghysels.2006}, \cite{Banbura.2013} and \cite{Halbleib.2016}.

Figure \ref{pic:combination_weights_all_assets} depicts the time series of resulting model combination weights, reflecting the relative past prediction performance of each model at each day. We observe distinct time variations, which are obviously driven by the market environment. The gray shaded area in Figure \ref{pic:combination_weights_all_assets} shows the daily  averages of the estimated assets' volatility, computed using univariate realized kernels. We thus observe that during high-volatility periods, the HF approach produces superior forecasts and has the highest weight. 
This is particularly true during the financial crisis and during more recent periods of market turmoil, where the estimation precision induced by HF data clearly pays off. Conversely, forecasts based on daily stochastic volatility perform considerably strong in more calm periods as in 2013/2014. 
Forecasts implied by the shrinkage estimator have lower but non-negligible weight, and thus significantly contribute to an optimal forecast combination. 
In contrast, predictions based on the sample covariance matrix are negligible and are always dominated by the shrinkage estimator.

Superior predictive accuracy, however, is not equivalent to superior portfolio performance. 
An investor \textit{not} adjusting for transaction costs {ex ante} may rebalance her portfolio unnecessarily often if she relies on quickly adjusting predictive models such as SV and -- to less extent -- on the HF-based forecasts.
Though the underlying predictions may be quite accurate, transaction costs can easily offset this advantage compared to the use of the rather smooth predictions implied by (regularized) forecasts based on rolling windows. This aspect is analyzed in the following section. 

\subsection{Evaluation of Portfolio Performance}
To evaluate the performance of the resulting portfolios implied by the individual models' forecasts, we compute portfolio performances based on bootstrapped portfolio weights for the underlying asset universe consisting of 250 assets which are randomly drawn out of the entire asset space.

Our setup represents an investor using the available information to sequentially update her beliefs about the parameters and state variables of the return distribution of the 250 selected assets. 
Based on the estimates, she generates predictions of the returns of tomorrow and accordingly allocates her wealth by solving \eqref{al:EU}, using risk aversion $\gamma= 4$. 
All computations are based on an investor with power utility function $U_\gamma(r) = \frac{\left(1+r\right)^{1-\gamma}}{1-\gamma}$.\footnote{Whereas the optimization framework (EU) does not generally depend on the power utility function, Bayesian numerical methods (as MCMC) allow us to work with almost arbitrary utility functions.
We feel, however, that the specific choice of a power utility function is not critical for the overall quality of our results: First, as advocated, among others, by \cite{Jondeau.2006} and \cite{Harvey.2010}, power utility implies that decision-making is unaffected by scale, as the class of iso-elastic utility function exhibits constant relative risk aversion (CRRA).	
Furthermore, in contrast to quadratic utility, power utility is affected by higher order moments of the return distribution, see also \cite{Holt.2002}.
Thus, power utility allows us to study the effect of parameter- and model uncertainty and makes our results more comparable to the existing literature.}

After holding the assets for a day, she realizes the gains and losses, updates the posterior distribution and recomputes optimal portfolio weights. 
This procedure is repeated for each period and allows analyzing the time series of the realized (''out-of-sample'') returns 
$r^{k} _{t+1}=\sum\limits_{i=1}^N{\hat\omega^{\mathcal{M}_k }}_{t+1, i} r^O _{t+1, i}$. Bootstrapping allows us to investigate the realized portfolio performances of 200 different investors, differing only with respect to the available set of assets.

We assume proportional ($L_1$) transaction costs according to \eqref{al:L1costs}. 
We choose this parametrization, as it is a popular choice in the literature, see, e.g., \cite{DeMiguel.2009b}, and is more realistic than quadratic transaction costs as studied in Section \ref{sec:setup}. 
As suggested by \cite{DeMiguel.2009b}, we fix $\beta$ to $50$bp, corresponding to a rather conservative proxy for transaction costs on the U.S.~market.
 Though such a choice is only a rough approximation to real transaction costs, which in practice depend on (possibly time-varying) institutional rules and the liquidity supply in the market, we do not expect that our general findings are specifically driven by this choice. While it is unavoidable that transaction costs are underestimated or overestimated in individual cases,
we expect that the overall effects of turnover penalization can be still captured with realistic magnitudes. 

The returns resulting from evaluating realized performances net of transaction costs are then given by
\begin{align}
r^{k,\text{nTC}} _{t+1}=r^{k}_{t+1} - \nu_{L_1}\left(\hat{\omega}^{\mathcal{M}_k}_{t+1}\right) = \sum\limits_{i=1}^N{\hat\omega^{\mathcal{M}_k }}_{t+1, i} r^O _{t+1, i} - \nu_{L_1}\left(\hat{\omega}^{\mathcal{M}_k}_{t+1}\right), \quad t = 1,\ldots,T.
\end{align}
We quantify the portfolio performance based on the average portfolio return, its volatility, its Sharpe ratio and the certainty equivalent for $\gamma=4$:
\begin{align}
\hat\mu\left(r^{k,\text{nTC}}\right):= &\frac{1}{T}\sum\limits_{t=1}^{T} r^{k,\text{nTC}}_{t}, \\ 
\hat\sigma\left(r^{k,\text{nTC}}\right):=&\sqrt{\frac{1}{T-1} \sum\limits_{t=1}^{T} \left(r^{k,\text{nTC}}_{t} - \hat\mu\left(r^{k,\text{nTC}}\right) \right)^2}, \\
SR^k :=&\frac{\hat\mu\left(r^{k,\text{nTC}}\right)}{\hat\sigma\left(r^{k,\text{nTC}}\right)}, \\ 
CE^k := &100\left(\left(\frac{1}{T}\sum\limits_{t=1}^T \left(1 + r_t^{k,\text{nTC}}\right)^{1-4}  \right)^{\frac{1}{1-4}}- 1\right).
\end{align}
Moreover, we quantify the portfolio turnover, the average weight concentration, and the average size of the short positions:
\begin{align}
TO^k &:= \frac{1}{T-1} \sum\limits_{t=2}^T \norm{
\hat\omega^{\mathcal{M}_k}_{t} - \frac{\hat\omega^{\mathcal{M}_k}_{t-1}\circ \left(1+r_{t}\right)}{1 + \hat{\omega}^{\mathcal{M}_k \prime}_{t-1} r_{t}} 
}_1, \\
pc^{k} &:=\frac{1}{T}\sum\limits_{t=1}^{T}\sum\limits_{i=1}^N \left(\hat\omega^{\mathcal{M}_k }_{t, i}\right)^2, \\
sp^{k} &:=\frac{1}{T}\sum\limits_{t=1} ^{T} \sum \limits_{i=1} ^N  |\hat\omega_{t, i}| \mathbbm{1}_{\left\{\hat\omega^{\mathcal{M}_k }_{t, i} < 0\right\}}. \end{align}
We compare the performance of the resulting (optimal) portfolios to those of a number of benchmark portfolios. 
First, we implement the naive portfolio allocation $\omega^\text{Naive}:=\frac{1}{N}\iota$ based on daily and bi-monthly rebalancing.\footnote{Implementing naive schemes based on weekly, monthly or no rebalancing at all, does not qualitatively alter the results.} 
We also include the ''classical'' global minimum variance portfolio based on the Ledoit-Wolf shrinkage estimator, $\omega_{t+1} ^\text{mvp} := \frac{\left(\hat\Sigma^\text{t, Shrink}\right)^{-1}\iota}{\iota^\prime\left(\hat\Sigma^\text{t, Shrink}\right)^{-1}\iota}, $
and the optimal global minimum variance weights with a no-short sale constraint, computed as solution to the optimization problem:
\begin{equation}\omega^\text{mvp, no s.} _{t+1} = \argmin \omega'\hat\Sigma^\text{t, Shrink}\omega\text{ s.t. } \iota'\omega = 1 \text{ and }\omega_i \geq 0 \text{ }\forall i=1,\ldots, N. \end{equation}
Furthermore, we include portfolio weights $\omega_{t+1} ^{\vartheta}$ computed as the solution to the optimization problem \eqref{equ:fanregularization} for gross-exposure constraints $\vartheta \in\{1,\ldots,8\}$ as proposed by \cite{Fan.2012}.

Finally, we compare the results to optimal combinations of portfolios which have been proposed in literature: 
Here, the idea of shrinkage is applied \textit{directly} to the portfolio weights by constructing portfolios which optimally (usually with respect to expected utility) balance between the estimated portfolio and a specified target portfolio.
Candidates from this field are a mixture of the efficient portfolio and the naive allocation according to \cite{Tu.2011}, with allocations denoted by $\omega^{tz}$, and mixtures of the efficient portfolio and the minimum variance allocation according to \cite{Kan.2007}, with allocations denoted by $\omega^{kz}$. 
Out of strategies considering estimation risk of the parameters specifically, we utilize the approach of \cite{Jorion.1986}. Here, shrinkage is applied to the estimated moments of the return distribution via traditional Bayesian estimation methods, with the resulting estimates used as inputs to compute the efficient portfolio ($\omega^{bs}$). 
Finally, we create a buy-and-hold portfolio consisting of an investment in all 308 assets, weighted by the market capitalization as of March 2007. This benchmark strategy aims at replicating the market portfolio.
Detailed descriptions of the implementation of the individual strategies are provided in Section A.8 of the Online Appendix.

Table \ref{tab:resultssinglehorserace} summarizes the (annualized) results. 
The results in the first panel correspond to strategies ignoring transaction costs in the portfolio optimization by setting $\nu_{L_1}(\omega)=0$ when computing optimal weights $\omega^*$. These strategies employ the information conveyed by the predictive models, but ignore transaction costs.  However, after rebalancing, the resulting portfolio returns are computed \textit{net} of transaction costs. This corresponds to common proceeding, where turnover penalization is done \textit{ex post}, but is not incorporated in the optimization process.

As indicated by highly negative average portfolio returns, a priori turnover penalizing is crucial in order to obtain a reasonable portfolio performance in a setup with transaction costs. 
The major reason is not necessarily the use of a sub-optimal weight vector, but rather the fact that these portfolio positions suffer from extreme turnover due to a high responsiveness of the underlying covariance predictions to changing market conditions, thus implying frequent rebalancing.
All four predictive models generate average annualized portfolio turnover of more than 50\%, which sums up to substantial losses during the trading period. If an investor would have started trading with 100 USD in June 2007 using the HF-based forecasts without adjusting for transaction costs, she would end up with less than 0.01 USD in early 2017.
None of the four approaches is able to outperform the naive portfolio though the individual predictive models clearly convey information. 
We conclude that the adverse effect of high turnover becomes particularly strong in case of large-dimensional portfolios.\footnote{\cite{Bollerslev.2016} find reverse results for GMV portfolio optimization based on HF-based covariance forecasts. For $N=10$ assets, they find an over-performance of HF-based predictions even in the presence of transaction costs. Two reasons may explain the different findings: First, the burden of a high dimensionality implies very different challenges when working with more than 300 assets. In addition, \cite{Bollerslev.2016} employ methods, which directly impose a certain regularization. In light of our findings in Section \ref{sec:setup}, this can be interpreted as making the portfolios (partly) robust to transaction costs even if the latter are not explicitly taken into account.}

Explicitly adjusting for transaction costs, however, strongly changes the picture: The implemented strategies produce significantly positive average portfolio returns and reasonable Sharpe ratios. The turnover is strongly reduced and amounts to less than 1\% of the turnover implied by strategies which do not account for transaction costs. All strategies clearly outperform the competing strategies in terms of net-of-transaction-cost Sharpe ratios and certainty equivalents.  

Comparing the performance of individual models, the HF-based model, the SV model and the shrinkage estimator yield the strongest portfolio performance, but perform very similarly in terms of $SR$ and $CE$. 
In line with the theoretical findings in Section \ref{sec:setup}, we thus observe that turnover regularization reduces the performance differences between individual models. 
\emph{Combining} forecasts, however, outperforms individual models and yields higher Sharpe ratios. 
We conclude that the combination of predictive distributions resulting from HF data with those resulting from low-frequency (i.e., daily) data is beneficial -- even after accounting for transaction costs. 
Not surprisingly, the sample covariance performs worse but still provides a reasonable Sharpe ratio. 
This confirms the findings in Section \ref{sec:setup} and illustrates that under turnover regularization even the sample covariance yields sensible inputs for high-dimensional predictive return distributions.

Imposing restrictions to reduce the effect of estimation uncertainty, such as a no-short sale constraint in GMV optimization ($\omega^\text{mvp no s.}$), however, does not yield a competing performance. 
These findings underline our conclusions drawn from Proposition \ref{proposition:l1costs}: Though gross-exposure constraints are closely related to a penalization of transaction costs and minimize the effect of misspecified elements in the covariance matrix (see, e.g., \cite{Fan.2012}), such a regularization yields sub-optimal weights in the actual presence of transaction costs.

The last column of Table \ref{tab:resultssinglehorserace} gives the average fraction of days where the portfolio is rebalanced.\footnote{Due to numerical instabilities, we interpret a turnover of less than $0.001$\% as no rebalancing. } The incorporation of transaction costs implies that investors rebalance in just roughly 15\% of all trading days.  
Turnover penalization can be thus interpreted as a Lasso-type mechanism, which enforces buy-and-hold strategies over longer periods as long as markets do not reveal substantial shifts requiring rebalancing. 
In contrast, the benchmark strategies that do not account for transaction costs tend to rebalance permanently (small fractions of) wealth, thus cumulating an extraordinarily high amount of transaction costs. 

To interpret the economic significance of the out-of-sample portfolio performance, we evaluate the resulting utility gains using the framework by \cite{Fleming.2003}. We thus compute the hypothetical fees, an investor with power utility and a relative risk aversion  $\gamma=4$ would be willing to pay on an annual basis to switch from an individual strategy $\mathcal{M}_1$ to strategy $\mathcal{M}_2$.\footnote{We also used alternative values of $\gamma$ with $\gamma =1,\ldots,10$, however, do not find qualitative differences.} The fee is computed such that the investor would be indifferent between the two strategies in terms of the resulting utility. For $r_t ^{\mathcal{M}_1, nTC}$ and $r_t ^{\mathcal{M}_2, nTC}$, we thus determine $\Delta_{\mathcal{M}_1,\mathcal{M}_2}$ such that
\begin{align}
\sum\limits_{t=1}^{T} \left(1 + r_t ^{\mathcal{M}_1, nTC}\right)^{(1-\gamma)} = \sum\limits_{t=1}^{T} \left(1 + r_t ^{\mathcal{M}_2, nTC} - \Delta_{\mathcal{M}_1,\mathcal{M}_2}\right)^{(1-\gamma)}. 
\end{align}
Table \ref{tab:performancefees} shows the average fees an investor would be willing to pay to switch from strategy $\mathcal{M}_i$ in column $i$ to strategy $\mathcal{M}_j$ in row $j$.
We thus find that on average, investors would be willing to pay positive amounts to switch to the superior mixing models. 
For none of the 6 different strategies, the 5\% quantiles of the performance fees are below zero, indicating the strong performance of the mixing model. 
Hence, even after transaction costs investors would gain higher utility by combining forecasts based on high- and low-frequency data.
In our high-dimensional setup, an investor would be willing to pay 7.5 basis points on average on an annual basis to switch from the naive allocation to the leading model combination approach. 
The $1/N$ allocation is thus not only statistically but also economically inferior. 
However, as expected based on the findings in Section \ref{sec:setup}, the switching fees between the individual penalized models are substantially lower. 
This result again confirms the finding that relative performance differences between high-frequency-based and low-frequency-based approaches level out when transaction costs are accounted for.

\section{Conclusions}\label{sec:conclusion}

This paper theoretically and empirically studies the role of transaction costs in large-scale portfolio optimization problems. 
We show that the \textit{ex ante} incorporation of transaction costs regularizes the underlying covariance matrix. 
In particular, our theoretical framework shows the close relation between Lasso- (Ridge-) penalization of turnover in the presence of proportional (quadratic) transaction costs. 
The implied turnover penalization improves portfolio allocations in terms of Sharpe ratios and utility gains. 
This is on the one hand due to regularization effects improving the stability and the conditioning of covariance matrix estimates and on the other hand due to a significant reduction of the amount (and frequency) of rebalancing and thus turnover costs.

In contrast to a pure (statistical) regularization of the covariance matrix, in case of turnover penalization, the regularization parameter is naturally given by the transaction costs prevailing in the market. 
This a priori institution-implied regularization reduces the need for exclusive covariance regularizations (as, e.g., implied by shrinkage), but does not make it superfluous. 
The reason is that a transaction cost regularization only partly contributes to a better conditioning of covariance estimates, but does not guarantee this effect at first place. 
Accordingly, procedures which additionally stabilize predictions of the covariance matrix and their inverse, are still effective but are less crucial as in the case where transaction costs are neglected. 

Performing an extensive study utilizing Nasdaq data of more than 300 assets from 2007 to 2017 and employing more than 70 billion intra-daily trading message observations, we empirically show the following results:
First, we find that neither high-frequency-based nor low-frequency-based predictive distributions result into positive Sharpe ratios when transaction costs are \emph{not} taken into account ex ante.  
Second, as soon as transaction costs are incorporated ex ante into the optimization problem, resulting portfolio performances strongly improve, but differences in the relative performance of competing predictive models become smaller. 
In particular, a portfolio bootstrap reveals that predictions implied by HF-based (blocked) realized covariance kernels, by a daily factor SV model and by a rolling-window shrinkage approach perform statistically and economically widely similarly. 
Third, despite of a similar performance of individual predictive models, mixing HF and low-frequency information is beneficial as it exploits the time-varying nature of the individual model's predictive ability. 
Finally, we find that strategies that \emph{ex-ante} account for transaction costs significantly outperform a large range of well-known competing strategies in the presence of transaction costs and many assets.

Our paper thus shows that transaction costs play a substantial role for portfolio allocation and reduce the benefits of individual predictive models. Nevertheless, it pays off to optimally combine high-frequency and low-frequency information. Allocations generated by adaptive mixtures dominate strategies from individual models as well as any naive strategies.

\section{Acknowledgements}
We thank Lan Zhang, two anonymous referees, Gregor Kastner, Kevin Sheppard, Allan Timmermann, Viktor Todorov and participants of the Vienna-Copenhagen Conference on Financial Econometrics, 2017, the 3rd Vienna Workshop on High-Dimensional Time Series in Macroeconomics and Finance, the Conference on Big Data in Predictive Dynamic Econometric Modeling, Pennsylvania, the Conference on Stochastic Dynamical Models in Mathematical Finance, Econometrics, and Actuarial Sciences, Lausanne, 2017, the 10th annual SoFiE conference, New York, the FMA European Conference, Lisbon, the 70th European Meeting of the Econometric Society, Lisbon, the 4th annual conference of the International Association for Applied Econometrics, Sapporo, the Annual Conference 2017 of the German Economic Association, Vienna, the 11th International Conference on Computational and Financial Econometrics, London, the RIDE Seminar at Royal Holloway, London, the 
Nuffield Econometrics/INET Seminar, Oxford, the research seminar at University Cologne, the 1st International Workshop on “New Frontiers in Financial Markets”, Madrid, and the Brown Bag seminar at Vienna University of Business and Economics for valuable feedback.
We greatly acknowledge the use of computing resources by the Vienna Scientific Cluster. 
Voigt gratefully acknowledges financial support from the Austrian Science Fund (FWF project number DK W 1001-G16) and the IAAE.


\singlespacing
\bibliography{bib}
\newpage
\appendix

\begin{table}[ht]
	\centering
	\begin{tabularx}{0.6\textwidth}{lYYY}
		\toprule
		& Mean                 & SD                     & Best                 \\ \midrule
		HF          & 707.0                & 117.08                 & 0.171                \\
		\addlinespace[-0.7em] & {\tiny[702.9,711.0]} & {\tiny[113.58,119.81]} & {\tiny[0.160,0.181]} \\
		Sample                & 586.5                & 337.56                 & 0.005                \\
		\addlinespace[-0.7em] & {\tiny[580.7,591.8]} & {\tiny[318.99,353.23]} & {\tiny[0.003,0.007]} \\
		LW                    & 694.2                & 165.72                 & 0.197                \\
		\addlinespace[-0.7em] & {\tiny[689.1,699.1]} & {\tiny[161.29,170.00]} & {\tiny[0.185,0.211]} \\
		SV                    & 705.0                & 138.30                 & 0.135                \\
		\addlinespace[-0.7em] & {\tiny[701.6,708.2]} & {\tiny[136.11,140.10]} & {\tiny[0.126,0.146]} \\
		Combination           & {731.3}       & {97.95}         & {0.484}       \\
		\addlinespace[-0.7em] & {\tiny[727.1,735.6]} & {\tiny[96.71,99.34]}   & {\tiny[0.477,0.491]} \\ \bottomrule
	\end{tabularx}
	\caption{Predictive accuracy of the models, evaluated using the time series of out-of-sample predictive log likelihoods, corresponding to $\log p_{t}(r_{t+1}^O|\mathcal{M}_k)$. The values in brackets correspond to bootstrapped 95 \% confidence intervals.  
		\textit{Mean} denotes the average posterior predictive log-likelihood. \textit{SD} is the standard deviation of the time series and \textit{Best} is the fraction of total days where the individual model obtained the highest predictive accuracy among its competitors.
		\textit{Combination} corresponds to predictions created by combining the individual models based on  \eqref{equ:optimal_c_logpredictivedistribution}.}
	\label{tab:predlikelihood}
\end{table}

\begin{table}[ht]
	\centering
	  \resizebox{\textwidth}{!}{%
	\begin{tabular}{lcccccccc}
		\toprule
		&          $\hat\mu$          &        $\hat\sigma$        &         $SR$          &         $CE$          &            $TO$             &             $pc$             &         $sp$          &           \% Trade           \\ \midrule
		\multirow{2}{*}{	$\omega^\text{SV, no TC}$}                        &            -58.9            &           16.12            &  \multirow{2}{*}{-}   &  \multirow{2}{*}{-}   &            55.73            &            31.102            &         2.41          & \multirow{2}{*}{ $>$ 99.95 } \\
		\addlinespace[-0.7em]                                              &    {\tiny[-62.7,-55.8]}     &    {\tiny[15.72,16.52]}    &                       &                       &    {\tiny[53.52,58.16]}     &    {\tiny[30.279,31.942]}    &  {\tiny[2.12,2.78]}   &                              \\
		\multirow{2}{*}{	$\omega^\text{HF, no TC}$}                        &  \multirow{2}{*}{ $<$ -99}  &           15.97            &  \multirow{2}{*}{-}   &  \multirow{2}{*}{-}   &            245.4            &            56.791            &         4.01          & \multirow{2}{*}{ $>$ 99.95 } \\
		\addlinespace[-0.7em]                                              &                             &    {\tiny[15.52,16.33]}    &                       &                       &   {\tiny[239.29,250.43]}    &     {\tiny[55.7,57.892]}     &  {\tiny[3.83,4.65]}   &                              \\ \midrule
		\multirow{2}{*}{	$\omega^\text{Sample}$  }                         &             5.4             &           16.50            &         0.326         &         3.993         &            0.19             &            42.427            &         1.21          &            18.45             \\
		\addlinespace[-0.7em]                                              &      {\tiny[3.8,7.2]}       &    {\tiny[15.87,17.16]}    & {\tiny[0.224,0.446]}  & {\tiny[2.367,5.908]}  &     {\tiny[0.18,0.20]}      &    {\tiny[37.75,47.465]}     &  {\tiny[0.98,1.38]}   &     {\tiny[18.20,19.74]}     \\
		\multirow{2}{*}{	$\omega^\text{LW}$}                               &             5.6             &           14.72            &         0.380          &         4.501         &            0.88             &            24.405            &         0.98          &            11.95             \\
		\addlinespace[-0.7em]                                              &      {\tiny[4.5,6.6]}       &    {\tiny[14.28,15.06]}    & {\tiny[0.307,0.449]}  & {\tiny[3.412,5.497]}  &     {\tiny[0.83,0.93]}      &    {\tiny[22.784,26.427]}    &  {\tiny[0.65,1.11]}   &     {\tiny[10.92,12.96]}     \\
		\multirow{2}{*}{	$\omega^\text{HF}$  }                             &             6.5             &           16.71            &         0.388         &         5.081         &            0.14             &            17.165            &         0.48          &            14.44             \\
		\addlinespace[-0.7em]                                              &      {\tiny[5.9,6.9]}       &    {\tiny[16.52,16.93]}    & {\tiny[0.354,0.414]}  & {\tiny[4.488,5.518]}  &     {\tiny[0.09,0.21]}      &    {\tiny[15.937,18.062]}    &  {\tiny[0.21,0.61]}   &     {\tiny[13.41,15.54]}     \\
		\multirow{2}{*}{	$\omega^\text{SV}$  }                             &             6.6             &           16.76            &         0.391         &         5.153         &            0.32             &            17.860             &         0.71          &            14.13             \\
		\addlinespace[-0.7em]                                              &       {\tiny[5.9,7.0]}        &    {\tiny[16.59,16.99]}    & {\tiny[0.354,0.417]}  & {\tiny[4.489,5.589]}  &     {\tiny[0.21,0.41]}      &    {\tiny[16.086,18.827]}    &  {\tiny[0.68,0.92]}   &     {\tiny[13.11,15.20]}     \\
		\multirow{2}{*}{	$\omega^\text{Comb. }$}                           &             6.5             &           14.86            &         0.438         &         5.401         &            0.49             &            20.609            &         0.81          &            13.70             \\
		\addlinespace[-0.7em]                                              &       {\tiny[5.0,7.8]}        &    {\tiny[14.45,15.39]}    &  {\tiny[0.342,0.52]}  & {\tiny[3.877,6.693]}  &     {\tiny[0.46,0.53]}      &    {\tiny[18.483,23.115]}    &  {\tiny[0.54,0.92]}   &     {\tiny[12.53,14.83]}     \\ \midrule
		\multirow{2}{*}{	$\omega^\text{Naive}$ }                           &             5.2             &            23.5            &         0.224         &         2.399         &            1.10             &             0.4              & \multirow{2}{*}{ - }  & \multirow{2}{*}{ $>$ 99.95 } \\
		\addlinespace[-0.7em]                                              &      {\tiny[4.9,5.5]}       &    {\tiny[23.14,23.84]}    & {\tiny[0.204,0.234]}  & {\tiny[2.035,2.733]}  &     {\tiny[1.09,1.12]}      &       {\tiny[0.4,0.4]}       &                       &                              \\
		\multirow{2}{*}{	$\omega^\text{Naive, 2m}$}                        &             5.9             &           23.03            &         0.258         &         3.294         &            0.15             &            0.403             & \multirow{2}{*}{ - }  &             1.66             \\
		\addlinespace[-0.7em]                                              &      {\tiny[5.7,6.2]}       &    {\tiny[22.68,23.36]}    & {\tiny[0.245,0.273]}  & {\tiny[2.991,3.632]}  &     {\tiny[0.14,0.16]}      &     {\tiny[0.403,0.404]}     &                       &      {\tiny[1.63,1.66]}      \\
		\multirow{2}{*}{	$\omega^\text{mvp}$ }                             &            -59.4            &           14.11            &  \multirow{2}{*}{ -}  &  \multirow{2}{*}{ -}  &            53.49            &            45.65             &         3.33          & \multirow{2}{*}{ $>$ 99.95 } \\
		\addlinespace[-0.7em]                                              &    {\tiny[-62.8,-56.1]}     &    {\tiny[13.73,14.55]}    &                       &                       &    {\tiny[52.25,54.64]}     &    {\tiny[42.864,48.28]}     &  {\tiny[3.24, 3.42]}  &                              \\
		\multirow{2}{*}{	$\omega^\text{mvp, no s.}$}                       &             2.8             &            13.1            &         0.213         &         1.935         &            3.53             &            10.616            & \multirow{2}{*}{ - }  & \multirow{2}{*}{ $>$ 99.95 } \\
		\addlinespace[-0.7em]                                              &       {\tiny[2.0,3.6]}        &   {\tiny[12.97, 13.29]}    & {\tiny[0.149, 0.279]} & {\tiny[1.092, 2.794]} &     {\tiny[3.34, 3.69]}     &    {\tiny[9.161, 11.548]}    &                       &                              \\
		\multirow{2}{*}{	$\omega^\text{kz, 3f}$ } & \multirow{2}{*}{ $<$ -99 }  & \multirow{2}{*}{  $>$ 50 }  &           -           &           -           &            941.9            &  \multirow{2}{*}{ $>1000$ }  &         5.98          & \multirow{2}{*}{ $>$ 99.95 } \\
		\addlinespace[-0.7em]                                              &                             &                            &                       &                       &   {\tiny[175.51,2339.7]}    &                              &  {\tiny[3.72, 8.8]}   &                              \\
		\multirow{2}{*}{	$\omega^\text{tz}$ }                              & \multirow{2}{*}{  $<$ -99 } & \multirow{2}{*}{   $>$ 50 } &           -           &           -           & \multirow{2}{*}{  $>1000$}  &  \multirow{2}{*}{ $>1000$ }  &         28.09         & \multirow{2}{*}{ $>$ 99.95 } \\
		\addlinespace[-0.7em]                                              &                             &                            &                       &                       &                             &                              & {\tiny[17.31, 57.51]} &                              \\
		\multirow{2}{*}{	$\omega^\text{bs}$ }                              & \multirow{2}{*}{ $<$ -99 }  & \multirow{2}{*}{   $>$  50} &           -           &           -           & \multirow{2}{*}{ $> 1000$ } & \multirow{2}{*}{  $>1000$  } &         124.7         & \multirow{2}{*}{ $>$ 99.95 } \\
		\addlinespace[-0.7em]                                              &                             &                            &                       &                       &                             &                              & {\tiny[39.8, 228.14]} &                              \\
		\multirow{2}{*}{	$\omega^{\vartheta}$ }                              & -24.28  & 12.04 &           -           &           -           & 25.846 & 18.692 &         1.499        & \multirow{2}{*}{ $>$ 99.95 } \\
		\addlinespace[-0.7em]                                              &      {\tiny[ -26.28, -22.15]}                         &   {\tiny[11.78, 12.34]}                         &                       &                       &       {\tiny[25.43, 26.31]}                      &             {\tiny[18.05, 19.28]}                 & {\tiny[1.40, 1.58]} &                              \\
		\midrule
\multirow{1}{*}{	$\omega^\text{Market}$} & 6.2 & 21.19 & 0.292 & 3.958 & - & - &  - & - \\
  \bottomrule
	\end{tabular}
}
	\caption{ Annualized averages of the bootstrapped out-of-sample portfolio performances based on $1904$ trading days. Transaction costs are proportional to the $L_1$ norm of re-balancing (as of \eqref{al:L1costs}). $\hat \mu$ is the annualized portfolio return in percent, $\hat \sigma$ is the annualized standard deviation in percent, $SR$ denotes the (annualized) out-of-sample Sharpe ratio of the individual strategies, $CE$ is the Certainty Equivalent for an investor with power utility function and risk-aversion factor $\gamma=4$, $TO$ is the average turnover in percent, $pc$ is the average weight concentration ($L_2$ norm  of the portfolio weights) and $sp$ is the average proportion of the sum of negative portfolio weights. 
		\textit{\% Trade} is the fraction of days with trading activity more than 0.001\%.}  
	\label{tab:resultssinglehorserace}
\end{table}

\begin{table}[ht]
	\centering
	\begin{tabular}{rcccccc}
		\toprule
		&       Naive        &         LW         &       Sample        &         SV          &          HF          &     Combination     \\
		mvp no s. &        1.98        &        6.37        &        6.16         &        6.01         &         6.41         &        7.73         \\
		\addlinespace[-0.7em] & {\tiny[1.86,2.01]} & {\tiny[6.24,7.51]} & {\tiny[5.98,7.56]}  & {\tiny[5.80,6.89]}  &  {\tiny[6.32,7.25]}  & {\tiny[7.12,8.02]}  \\
		Naive &                    &        6.18        &        5.99         &        5.83         &         6.40         &        7.57         \\
		\addlinespace[-0.7em] &                    & {\tiny[5.65,7.86]} & {\tiny[5.41,6.43]}  & {\tiny[5.44,6.46]}  &  {\tiny[6.29,6.82]}  & {\tiny[7.25,7.89]}  \\
		LW &                    &                    &        0.01         &        0.41         &         0.11         &        1.44         \\
		\addlinespace[-0.7em] &                    &                    & {\tiny[-0.29,0.60]} & {\tiny[-0.01,0.91]} & {\tiny[-0.43,0.86]}  & {\tiny[0.57,2.25]}  \\
		Sample &                    &                    &                     &         0.1         &         0.3          &        1.38         \\
		\addlinespace[-0.7em] &                    &                    &                     & {\tiny[-0.22,0.89]} & {\tiny[-0.12, 1.32]} & {\tiny[0.84,2.13]}  \\
		SV &                    &                    &                     &                     &        - 0.68        &        1.11         \\
		\addlinespace[-0.7em] &                    &                    &                     &                     & {\tiny[-0.29, 1.02]} & {\tiny[0.95, 1.81]} \\
		HF &                    &                    &                     &                     &                      &        1.31         \\
		\addlinespace[-0.7em] &                    &                    &                     &                     &                      & {\tiny[0.88,1.79]}  \\ \bottomrule
	\end{tabular}
	\caption{Average annual performance fees in basis points for switching between the individual models based on $\gamma=4$. 
	The table reads: \textit{On an annual basis an investor would be willing to pay 7.57 basis point to switch from the naive portfolio to the combination strategy. 
	The number in brackets denote the bootstrapped 95 \% confidence intervals. }}
	\label{tab:performancefees}
\end{table}

\begin{figure}[ht]
	\resizebox{0.7\textwidth}{!}{\includegraphics{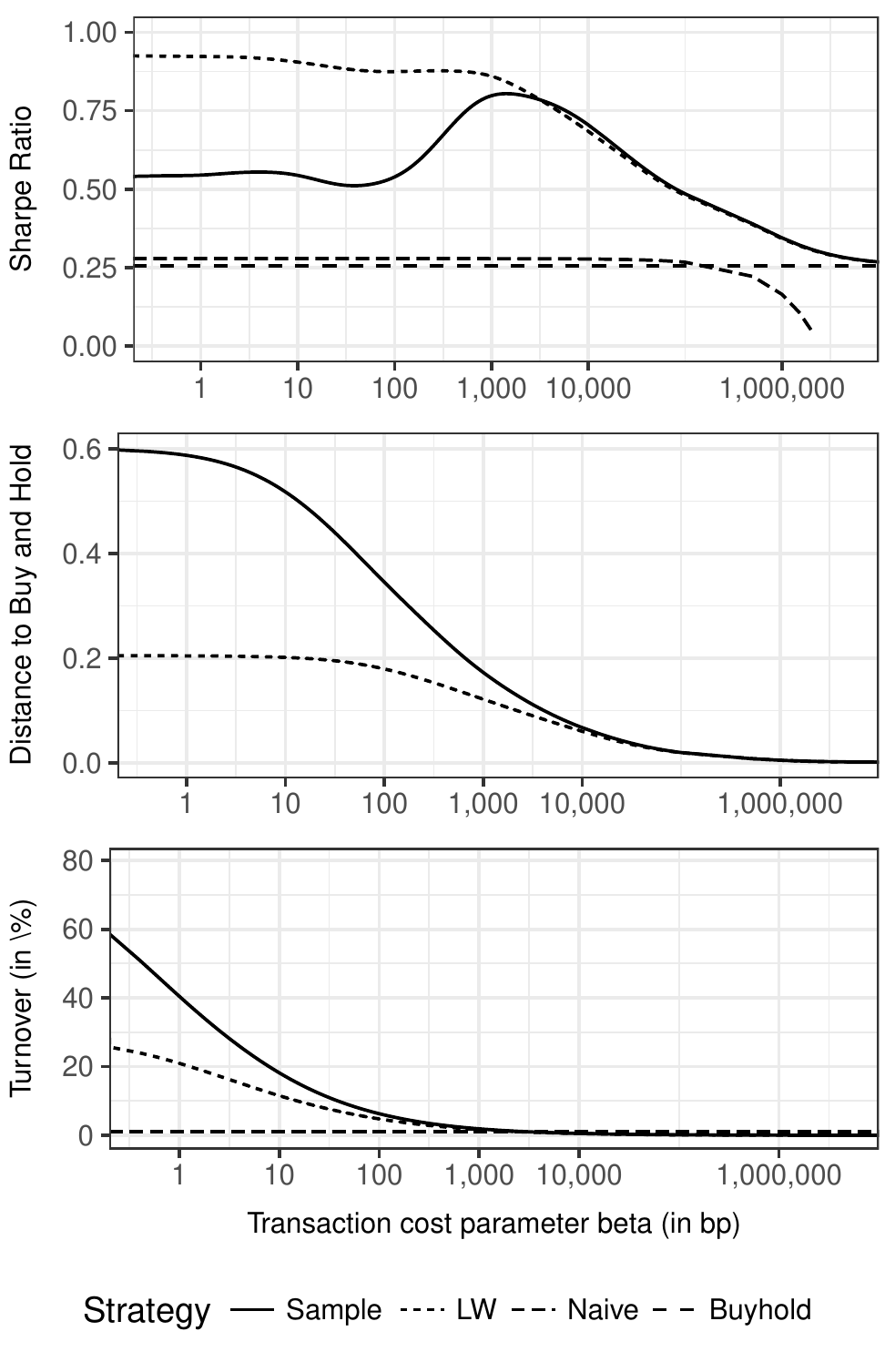}}
	\centering
	\caption{Annualized realized Sharpe ratio net of transaction costs for portfolios based on $N=308$ assets with daily reallocation for the period from June 2007 until March 2017 with quadratic transaction costs. 
		The portfolio weights are computed as solutions to optimization problem \eqref{equ:maximization_naive} with varying choices of the transaction cost parameter $\beta$ (in basis points). 
		The covariance matrix is estimated using the sample covariance matrix $S_t$ and a regularized version $\hat{\Sigma}^{t,\text{Shrink}}$ \citep{Ledoit.2003, Ledoit.2004} (both based on a rolling window of 500 days). 
		The mean of the returns, $\mu_t$, is set to 0. 
		The second panel corresponds to the average $L_1$ difference of the estimated weights to the buy-and-hold portfolio with naive ($1/N$) initial allocation.
		The third panel illustrates average turnover (in percent), measured as $L_1$ difference between $\omega_{t+1}^\beta$ and $\omega_{t^+}^\beta$.
		Strategy \textit{Naive} denotes an allocation with daily rebalancing to the naive allocation, \textit{Buyhold} corresponds to implementing 1/N once and refraining from rebalancing afterwards.}
	\label{fig:sharpe_ratio_varying_beta}
\end{figure}

\begin{figure}[ht]
	\resizebox{0.80\textwidth}{!}{\includegraphics{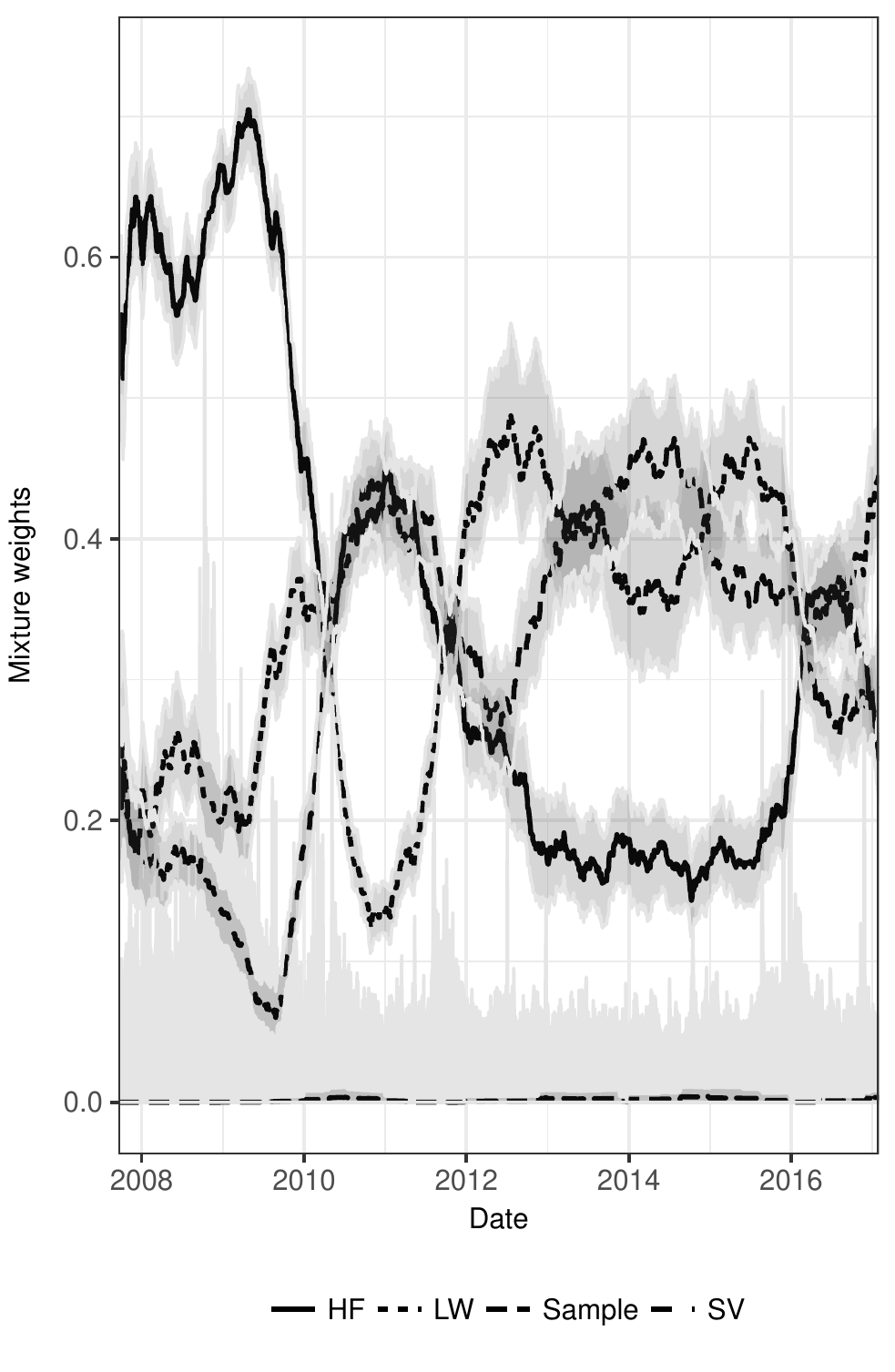}}
	\centering
	\caption{Combination weights computed based on \eqref{equ:optimal_c_logpredictivedistribution}. The weights are sequentially updated to maximize the log posterior predictive likelihood of past observed returns based on a rolling-window of 250 days. The lines correspond to the mean of the bootstrapped combination weights.
		The gray confidence bands denote the 2.5\% and 97.5\% quantiles of the bootstrapped combination weights.  	
		The shaded gray area in the background visualizes a scaled time series of the daily estimated volatility across all $308$ markets.}
	\label{pic:combination_weights_all_assets}
\end{figure}

\end{document}